\providecommand{\U}[1]{\protect\rule{.1in}{.1in}}
\newtheorem{theorem}{Theorem}
\newtheorem{lemma}{Lemma}
\newtheorem{proposition}{Proposition}
\newtheorem{remark}{Remark}
\newenvironment{proof}[1][Proof]{\noindent\textbf{#1.} }{\ \rule{0.5em}{0.5em}}
\renewcommand{\vec}[1]{\bm{#1}}
\newcommand{\mat}[1]{\bm{#1}}
\newcommand{\tp}{\mathrm{T}}
\begin{document}

\title{Information geometry of bosonic Gaussian thermal states}
\author{Zixin Huang}
\affiliation{School of Mathematical and Physical Sciences, Macquarie University, NSW 2109, Australia}
\affiliation{Centre for Quantum Software and Information, Faculty of Engineering and
Information Technology, University of Technology Sydney, Ultimo, NSW 2007, Australia}
\author{Mark M. Wilde}
\affiliation{School of Electrical and Computer Engineering, Cornell University, Ithaca, New
York 14850, USA}

\begin{abstract}
Bosonic Gaussian thermal states form a fundamental class of states in quantum information science. This paper explores the information geometry of these states, focusing on characterizing the distance between two nearby states and the geometry induced by a parameterization in terms of their mean vectors and Hamiltonian matrices. In particular, for the family of bosonic Gaussian thermal states, we derive expressions for their Fisher--Bures, Kubo--Mori, and $\alpha$-$z$ information matrices with respect to their mean vectors and Hamiltonian matrices. An important application of our formulas consists of fundamental limits on how well one can estimate these parameters. We additionally establish formulas for the derivatives and the symmetric logarithmic derivatives of bosonic Gaussian thermal states. The former could have applications in gradient descent algorithms for quantum machine learning when using bosonic Gaussian thermal states as an ansatz, and the latter in formulating optimal strategies for single parameter estimation of bosonic Gaussian thermal states. Finally, the expressions for the aforementioned information matrices could have additional applications in  natural gradient descent algorithms when using bosonic Gaussian thermal states as an ansatz.

\end{abstract}
\maketitle
\tableofcontents

\section{Introduction}

\subsection{Motivation}

Bosonic Gaussian states are foundational in quantum information science~\cite{Wang2007,Adesso2014,Ser17,RevModPhys.84.621}. Not only do they generalize Gaussian probability densities from probability theory, but they also are routinely generated in quantum optics laboratories for experiments related to quantum communication~\cite{Lee2011}, sensing~\cite{Aasi2013,Abbott2016}, and computation~\cite{Madsen2022}. They are characterized exclusively by their first and second moments, which makes them far more analytically tractable to deal with than non-Gaussian states. As such, we have witnessed much theoretical progress in understanding their features and properties.

Here we are concerned with a foundational question regarding them; namely, how can we characterize the distance between two nearby bosonic Gaussian states and the geometry realized by a parameterization of them? There are various ways that one can address such a question, but here we adopt the perspective of information geometry of quantum states. To be more precise, let us suppose that a general $n$-mode bosonic Gaussian state is parameterized as
\begin{equation}
    \rho(\mu,H) \coloneqq \frac{1}{Z(\mu,H)} \exp\!\left( -\frac{1}{2} (\hat{x} - \mu)^T H (\hat{x} - \mu)\right),
    \label{eq:intro-bosonic-Gaussian-thermal-state}
\end{equation}
where $\hat{x}$ is a $2n\times 1$ vector of canonical quadrature operators, $\mu$ is a $2n\times 1$ real mean vector, $H$ is a $2n\times 2n $ real symmetric positive definite Hamiltonian matrix, and $Z(\mu,H)$ is a normalization factor that ensures the state is normalized (see Section~\ref{sec:background-bosonic-Gaussian-thermal-state} for background). An arbitrary faithful (positive definite) bosonic Gaussian state can be written in this form, and we refer to this particular parameterization in terms of $\mu$ and $H$ as a \textit{bosonic Gaussian thermal state}, due to the fact that $\rho(\mu,H)$ can be understood from the thermodynamic perspective as a thermal state of the quadratic Hamiltonian $\frac{1}{2} (\hat{x} - \mu)^T H (\hat{x} - \mu)$~\cite[Section~3.2]{Ser17}. Note that all bosonic Gaussian states can be recovered from bosonic Gaussian thermal states by taking limits \cite[Appendix~A]{Lami2018}. Since we work in the thermal-state representation, our results apply to mixed states (i.e., faithful bosonic Gaussian states).

Our main concern here is to understand the distance between two nearby bosonic Gaussian thermal states $\rho(\mu,H)$ and $\rho(\mu+d\mu, H+dH)$, denoted by
\begin{equation}
    d(\rho(\mu,H),\rho(\mu+d\mu, H+dH)), \label{eq:dist-meas-gen}
\end{equation}
where $d$ is a distance measure, $d\mu$ represents an infinitesimal change in the mean vector $\mu$, and $dH$ represents an infinitesimal change in the Hamiltonian matrix $H$. This problem is well known to be within the scope of information geometry, and this distance can be characterized in terms of various generalizations of the notion of classical Fisher information matrix (see~\cite{Bengtsson2006,Liu2019,Sidhu2020,Jarzyna2020,Meyer2021fisherinformationin,sbahi2022provablyefficientvariationalgenerative,scandi2024quantumfisherinformationdynamical} for various reviews). For the classical case, there is one unique Fisher information matrix that characterizes the distance~\cite{Cencov1978}. However, more interestingly for the quantum case, there are many different generalizations of the Fisher information matrix~\cite{Petz1996}, depending on the distance  measure that one chooses in~\eqref{eq:dist-meas-gen}.

\subsection{Main results}

In this paper, we adopt the parameterization in~\eqref{eq:intro-bosonic-Gaussian-thermal-state} and evaluate different quantum generalizations of the Fisher information matrix with respect to this parameterization. 
We derive our formulas from first principles, starting from the recent developments of~\cite{patel2024naturalgradient} and \cite{Wilde2025}, which found analytical expressions for the Fisher--Bures, Kubo--Mori, and $\alpha$-$z$ information matrices of general thermal states. Our main results are as follows:
\begin{enumerate}
    \item formulas for the Fisher--Bures, Kubo--Mori, and $\alpha$-$z$ information matrices of bosonic Gaussian thermal states (Theorem~\ref{thm:main}),
    \item formulas for the  derivatives of bosonic Gaussian thermal states (Theorem~\ref{thm:derivative-bosonic-Gaussian}),
    \item formulas for the symmetric logarithmic derivatives of bosonic Gaussian thermal states (Theorem~\ref{thm:SLD-bosonic-G}).
\end{enumerate}
\noindent Although there is already prior work on deriving the Fisher--Bures information matrices of bosonic Gaussian states (see Section~\ref{sec:lit-rev}),
we believe our derivations are insightful, and we suspect that all of our results listed above will be useful in quantum machine learning algorithms that employ bosonic Gaussian thermal states as an ansatz.

One of the main applications of our formulas is in establishing fundamental limitations on how well one can estimate the parameters $\mu$ and $H$ when given samples of an unknown state $\rho(\mu,H)$ of the form in~\eqref{eq:intro-bosonic-Gaussian-thermal-state}. Indeed, the multiparameter quantum Cramer--Rao bound states that the following matrix inequality holds
for a general parameterized family $(\sigma(\theta))_{\theta\in\mathbb{R}^{J}
}$:
\begin{equation}
\operatorname{Cov}^{(n)}(\hat{\theta},\theta)\geq\frac{1}{n}\left[  I^{\operatorname{FB}
}(\theta)\right]  ^{-1},\label{eq:Cramer--Rao-multiple}
\end{equation}
where $n\in\mathbb{N}$ is the number of copies of the state $\sigma(\theta)$
available, $\hat{\theta}$ is an estimate of the parameter vector~$\theta$, the matrix $I^{\operatorname{FB}}(\theta)$ denotes the Fisher--Bures
information matrix (defined later in~\eqref{eq:qfim-explicit}), and the covariance matrix
$\operatorname{Cov}^{(n)}(\hat{\theta},\theta)$ measures errors in estimation and is
defined in terms of its matrix elements as
\begin{multline}
\lbrack\operatorname{Cov}^{(n)}(\hat{\theta},\theta)]_{k,\ell}\coloneqq\\
\sum_{m}\operatorname{Tr}[M_{m}^{(n)}\sigma(\theta)^{\otimes n}](\hat{\theta
}_{k}(m)-\theta_{k})(\hat{\theta}_{\ell}(m)-\theta_{\ell}).
\label{eq:cov-mat-def}
\end{multline}
In the above, $(M_{m}^{(n)})_{m}$ is an arbitrary positive operator-value measure used for estimation (satisfying $M_{m}^{(n)}\geq0$ for all $m$ and $\sum_{m}M_{m}^{(n)}=I^{\otimes n}$).
In general, this measurement acts collectively on all $n$ copies of the
state $\sigma(\theta)^{\otimes n}$. Additionally,
\begin{equation}
\hat{\theta}(m)\coloneqq(\hat{\theta}_{1}(m),\hat{\theta}_{2}(m),\ldots
,\hat{\theta}_{J}(m))
\label{eq:parameter-estimate-function}
\end{equation}
is a function that maps the measurement outcome $m$ to an estimate
$\hat{\theta}(m)$ of the parameter vector $\theta$. One can formulate a scalar performance metric
from~\eqref{eq:Cramer--Rao-multiple} by setting $W$ to be a positive
semi-definite \textquotedblleft weight\textquotedblright\ matrix and taking
the trace of both sides of~\eqref{eq:Cramer--Rao-multiple} with respect to~$W$,
which leads to
\begin{equation}
\operatorname{Tr}[W\operatorname{Cov}^{(n)}(\hat{\theta},\theta)]\geq\frac{1}
{n}\operatorname{Tr}\left[W\, I^{\operatorname{FB}}(\theta)^{-1}\right].
\label{eq:scalar-Cramer--Rao-from-matrix-one}
\end{equation}
Furthermore, under certain conditions, the inequality is tight up to a factor of two (see~\cite[Eqs.~(15) \& (17)]{Carollo2019},~\cite{Carollo2020}, and~\cite[Theorem~9]{Tsang2020}), in the sense that
\begin{equation}
\operatorname{Tr}[W\operatorname{Cov}^{(n)}(\hat{\theta},\theta)]\leq  \frac{2}
{n}\operatorname{Tr}[W\, I^{\operatorname{FB}}(\theta)^{-1}],    
\end{equation}
so that the Fisher--Bures information matrix is not only foundational in the sense of information geometry, but it is also essential in quantum estimation theory.

Due to the inequalities in~\eqref{eq:Cramer--Rao-multiple} and~\eqref{eq:scalar-Cramer--Rao-from-matrix-one}, our results thus apply to establishing fundamental limits on the performance of unbiased estimators $\hat{\mu}$ and $\hat{H}$ of the mean vector $\mu$ and Hamiltonian matrix $H$, respectively, of a bosonic Gaussian thermal state. 
Our results then are related to but complementary to the problems of learning bosonic Gaussian states~\cite{mele2024learningquantumstatescontinuous,bittel2024optimalestimatestracedistance,fanizza2024efficienthamiltonianstructuretrace}, estimating parameters of thermal states~\cite{GarciaPintos2024,abiuso2024fundamentallimitsmetrologythermal}, and learning thermal states~\cite{Anshu2021}. For our application, the performance metric is indeed a weighted covariance matrix of the form in~\eqref{eq:cov-mat-def}, whereas the learning applications employ different performance metrics (however, see~\cite[Appendix~VIII]{GarciaPintos2024} for a discussion of the relation between the scenarios).

As additional contributions, we establish formulas for the symmetric logarithmic derivative and the derivative of a bosonic Gaussian thermal state, again with respect to its mean vector and Hamiltonian matrix. The first formula should be helpful in determining optimal strategies for single-parameter estimation of bosonic Gaussian thermal states, due to it being known that the symmetric logarithmic derivative is an optimal observable to measure~\cite{braunsteincaves1994} (see also \cite[Eq.~(108)]{Sidhu2020});  we leave the full exploration of implementing such a strategy as a direction for future work. The second aforementioned formula is related to the recent developments in~\cite{bittel2024optimalestimatestracedistance}, where the derivative of a Gaussian state was established in terms of its mean vector and covariance matrix. It is possible that our latter formula could have applications related to those presented in~\cite{bittel2024optimalestimatestracedistance}, and we also leave the full exploration of this for future work. 

Another possible application of our findings, already mentioned above, is in the area of quantum machine learning~\cite{biamonte2017quantum}. Indeed, if one chooses a bosonic Gaussian thermal state as an optimization ansatz, then our formulas for the derivative of a bosonic Gaussian thermal state should be helpful in formulating gradient descent algorithms for optimization tasks. Moreover, our formulas for the Fisher--Bures, Kubo--Mori, and $\alpha$-$z$ information matrices should be helpful in formulating a generalization of gradient descent called natural gradient descent~\cite{amari1998natural,Stokes2020quantumnatural}, as considered recently for general thermal states~\cite{patel2024naturalgradient}. We also leave the full exploration of these directions for future work, with our main focus here being on the derivation of the various aforementioned formulas. 

\subsection{Comparison with prior work}

\label{sec:lit-rev}

Here we briefly recall some prior papers closely related to our findings here, and we also indicate the distinction between our findings and this previous work. A particular quantum generalization of Fisher information of bosonic Gaussian states, called the Fisher--Bures information here, has been studied previously~\cite{monras2013phasespaceformalismquantum,Gao2014} (see also \cite{Marian2016} for even earlier results); we include a detailed mathematical comparison of how the QFI matrix is parametrized by first and second moments in Appendix~\ref{sec:comparison}. In these works, the authors adopted a parameterization of bosonic Gaussian states different from that in~\eqref{eq:intro-bosonic-Gaussian-thermal-state}, and they established formulas for the Fisher--Bures information matrix elements with respect to this parameterization. Namely, they derived a formula for the Fisher--Bures information matrix based on the mean vector and the covariance matrix of the state.
See also~\cite[Eq.~(28)]{Banchi2015}, \cite[Eq.~(8)]{Safranek2018}, \cite[Eq.~(20)]{Nichols2018}, \cite[Theorem~2.10]{Liu2019}, and \cite[Eq.~(2.13)]{Huang2024} for alternate presentations.

The Fisher--Bures information matrices of different parameterizations are related by a congruence transformation expressed in terms of the Jacobian matrix~\cite[Proposition~2.1]{Liu2019}. As such, in principle, the formulas for the Fisher--Bures information matrix presented in our paper can be derived by using this fact and by starting from  the formulas previously derived in~\cite{monras2013phasespaceformalismquantum,Gao2014}, while employing the Jacobian of the transformation that expresses the covariance matrix $V$ as a function of the Hamiltonian matrix $H$ (see~\eqref{eq:V-of-H} for this transformation). However, here we do not follow this approach; furthermore, it is not clear to us whether we would have arrived at the particular formulas presented here by following the aforementioned approach.

A very similar parameterization to that given in \eqref{eq:intro-bosonic-Gaussian-thermal-state}, which is our main focus here, has been adopted and studied in earlier work~\cite{Jiang2014}. 
The most important distinction between our development here and that from~\cite{Jiang2014} is that we make use of the particular derivative formula in Proposition~\ref{prop:deriv}, which was not used in~\cite{Jiang2014}.  This formula was recently shown to be essential in formulating quantum algorithms for quantum Boltzmann machine learning \cite{patel2024quantumboltzmann,patel2024naturalgradient}, which include gradient descent and natural gradient descent.  Although we do not explore this application in detail here, we strongly suspect our formulas derived here will play a similarly essential role in quantum machine learning when using bosonic Gaussian thermal states as an ansatz.

The equations central to this work are
\begin{itemize}
\item the thermal-state derivative Eq.~\eqref{eq:thermal-state-deriv},
\item the QFI matrix element Eq.~\eqref{eq:FB-info-mat-gen-thermal} and \eqref{eq:FB-info-mat-gen-thermal-alt},
\item the bosonic Gaussian properties in Eqs.~\eqref{eq:derivs-Gaussian-Ham}-\eqref{eq:derivs-Gaussian-Ham-2},
\item the main results in Theorem  \ref{thm:main}.
\end{itemize}

Finally, let us note that we have also derived formulas for the Kubo--Mori and $\alpha$-$z$ information matrix elements of bosonic Gaussian thermal states, which was not explored in any prior paper, to the best of our knowledge.   An appealing feature of our derivation is that it is essentially the same as  the approach taken for the derivation of the Fisher--Bures information matrix elements, a fact noted in the statement of Theorem~\ref{thm:main}. The Kubo--Mori  information matrix also has applications in quantum machine learning \cite{patel2024naturalgradient}, and we leave the full exploration of this direction, for bosonic Gaussian thermal states, to future work.

\subsection{Paper organization}

The rest of our paper is organized as follows. Section~\ref{sec:prelim-not} establishes notation and provides some basic background on bosonic Gaussian states and the Fisher--Bures, Kubo--Mori, and $\alpha$-$z$ information matrices. Section~\ref{sec:review-gen-thermal-states} reviews and slightly extends some recent results from~\cite{patel2024naturalgradient}, regarding the derivatives of parameterized thermal states, their Fisher--Bures, Kubo--Mori, and $\alpha$-$z$ information matrices, and their symmetric logarithmic derivatives. Our main results are then presented in Sections~\ref{sec:main-FB-KM-bosonic-G}, \ref{sec:derivative-bosonic-G}, and~\ref{sec:SLD-bosonic-G}, regarding bosonic Gaussian thermal states and their Fisher--Bures, Kubo--Mori, and $\alpha$-$z$ information matrices, their derivatives, and their  symmetric logarithmic derivatives. In Section~\ref{sec:example}, we evaluate the expressions for the Fisher--Bures information matrix for a general single-mode, zero-mean bosonic Gaussian state. Finally, in Section~\ref{sec:conclusion}, we conclude with a brief summary and suggestions for future research.

\section{Preliminaries and notation}

\label{sec:prelim-not}

\subsection{Bosonic Gaussian thermal states}

\label{sec:background-bosonic-Gaussian-thermal-state}

We begin by briefly reviewing fundamentals of bosonic Gaussian thermal states. Our
review applies to $n$-mode  states, where $n$ is some fixed
positive integer. Let $\hat{x}_{j}$ denote each quadrature operator ($2n$ of
them for an $n$-mode state), and let
\begin{align}
\hat{x} &  \coloneqq\left[  \hat{q}_{1},\ldots,\hat{q}_{n},\hat{p}_{1}
,\ldots,\hat{p}_{n}\right]    \equiv\left[  \hat{x}_{1},\ldots,\hat{x}_{2n}\right]
\end{align}
denote the vector of quadrature operators, so that the first~$n$ entries
correspond to position-quadrature operators and the last~$n$ to
momentum-quadrature operators. The quadrature operators satisfy the canonical
commutation relations:
\begin{equation}
\left[  \hat{x}_{j},\hat{x}_{k}\right]  =i\Omega_{j,k}
,\label{eq:symplectic-form}
\end{equation}
where
\begin{equation}
\Omega=
\begin{bmatrix}
0 & 1\\
-1 & 0
\end{bmatrix}
\otimes I_{n}
\end{equation}
and $I_{n}$ is the $n\times n$ identity matrix. We also take the annihilation
operator $\hat{a}\coloneqq\left(  \hat{q}+i\hat{p}\right)  /\sqrt{2}$.

Let
$\rho$ be a Gaussian state, with the mean-vector entries $\mu_{j}
\coloneqq\left\langle \hat{x}_{j}\right\rangle _{\rho}$ for $j \in \{1,\ldots, 2n\}$, and let
$\mu\coloneqq\left(  \mu_{1},\ldots,\mu_{2n}\right)  $ denote the mean vector.
The entries of the $2n\times2n$ Wigner function covariance matrix $V$\ of
$\rho$ are given by
\begin{equation}
V_{j,k}\coloneqq\frac{1}{2}\left\langle \left\{  \hat{x}^c_{j},\hat
{x}^c_{k}\right\}  \right\rangle _{\rho},\label{eq:covariance-matrices}
\end{equation}
where $\{A,B\} \coloneqq AB+BA$ denotes the anticommutator and
\begin{equation}
    \hat{x}^c_{j} \equiv \hat{x}_{j}-\mu_{j}. \label{eq:centered-quad-op}
\end{equation}
A $2n\times2n$ matrix $S$ is symplectic if it preserves the symplectic form:
$S\Omega S^{T}=\Omega$. According to Williamson's theorem~\cite{W36}, there is
a diagonalization of the covariance matrix $V$ of the form
\begin{equation}
V=S\left(
D\oplus D\right)  S^{T},
\label{eq:cov-mat-S-D}
\end{equation}
where $S$ is a symplectic matrix and
$D\coloneqq\operatorname{diag}(\nu_{1},\ldots,\nu_{n})$ is a diagonal matrix
of symplectic eigenvalues such that $\nu_{i}\geq1/2$ for all $i\in\left\{
1,\ldots,n\right\}  $. For a faithful bosonic Gaussian state, all of these latter inequalities are strict (i.e., $\nu_{i}>1/2$ for all $i\in\left\{
1,\ldots,n\right\}  $).

We can write a  bosonic Gaussian thermal state $\rho(\mu,H)$ in the following
exponential form~\cite{Chen2005,K06,H10,H13book,Banchi2015}:
\begin{equation}
\rho(\mu,H)\coloneqq\frac{1}{Z(\mu,H)}\exp\!\left[  -\frac{1}{2}(\hat{x}
-\mu)^{T}H(\hat{x}-\mu)\right],  \label{eq:bosonic-Gaussian-thermal-state}
\end{equation}
where $H$ is a $2n\times 2n $ real symmetric positive definite Hamiltonian matrix, 
\begin{align}
Z(\mu,H)  &  \coloneqq\sqrt{\det(V+i\Omega/2)},\\
V & = \operatorname{coth}\!\left(\frac{i\Omega H}{2}\right) \frac{i\Omega}{2}. \label{eq:V-of-H}
\end{align}
Note that $V$ is indeed the $2n\times 2n $ covariance matrix defined in~\eqref{eq:covariance-matrices}, and~\eqref{eq:V-of-H} expresses $V$ in terms of $H$, with the inverse relationship given by
\begin{align}
H & =2i\Omega\operatorname{arcoth}(2iV\Omega),
\end{align}
where $\operatorname{arcoth}(x)\coloneqq\frac{1}{2}\ln\!\left(  \frac{x+1}
{x-1}\right)  $ with domain $\left(  -\infty,-1\right)  \cup\left(
1,+\infty\right)  $.
Here we exclusively consider the subset $\left(
1,+\infty\right)  $ of the domain. 
We can also write \cite[Appendix~A]{WTLB17}
\begin{equation}
    H    =-2\Omega S\left[  \operatorname{arcoth}(2D)\right]  ^{\oplus
2}  S  ^{T}\Omega \label{eq:G_rho},
\end{equation}
 making use of $S$ and $D$ from~\eqref{eq:cov-mat-S-D}.

Note that the matrix inequality $V+\frac{i\Omega}{2}\geq 0$ holds for a legitimate covariance matrix, and the inequality $V+\frac{i\Omega}{2}> 0$ holds for a faithful bosonic Gaussian state. The latter inequality is equivalent to $H > 0$ (i.e., $H$ being a positive definite real matrix).

Let us note the following alternative forms for the bosonic
Gaussian thermal state $\rho(\mu,H)$:
\begin{align}
&  \rho(\mu,H)\nonumber\\
&  =\frac{1}{Z(\mu,H)}\exp\!\left[  -\frac{1}{2}\sum_{i,j=1}^{2n}h_{ij}
\hat{x}^c_{i}\hat{x}^c_{j}\right] \\
&  =\frac{1}{Z(\mu,H)}\exp\!\left[  -\frac{1}{2}\sum_{i,j=1}^{2n}h_{ij}
\frac{\left\{  \hat{x}^c_{i},\hat{x}_{j}^c\right\}}{2}  
\right]  , \label{eq:sym-exp-form}
\end{align}
where the matrix elements of $H$ are denoted by $h_{i,j}$ and we made use of~\eqref{eq:centered-quad-op}. The
$H$ and $V$ matrices are each symmetric, with the former fact being critical to arriving at~\eqref{eq:sym-exp-form}.

\subsection{Fisher--Bures, Kubo--Mori, and $\alpha$-$z$ information matrices}

We now briefly review the Fisher--Bures, Kubo--Mori, and $\alpha$-$z$ information matrices.
Consider a parameterized family $\left(  \sigma(\theta)\right)  _{\theta
\in\mathbb{R}^{J}}$ of quantum states, and suppose for simplicity that each
state $\sigma(\theta)$ is positive definite.

The Bures distance of two general states $\omega$ and $\tau$~\cite{bures1969extension} is defined as
\begin{align}
d_{B}(\omega,\tau)  &  \coloneqq \sqrt{2\left(  1-\sqrt{F}(\omega,\tau
)\right)  },
\end{align}
where the fidelity $F(\omega,\tau)$ is defined as~\cite{Uhlmann1976}
\begin{equation}
F(\omega,\tau)\coloneqq\left\Vert \sqrt{\omega}\sqrt{\tau}\right\Vert _{1}
^{2}.
\end{equation}
It is well known that the infinitesimal squared line element between two
members of the parameterized family is~\cite[Eq.~(77)]{Liu2019}
\begin{equation}
d_{B}^{2}(\sigma(\theta),\sigma(\theta+d\theta))=\frac{1}{4}\sum_{i,j=1}
^{J}I_{ij}^{\operatorname{FB}}(\theta)\ d\theta_{i}\ d\theta_{j},
\label{eq:Bures-Fisher-metric}
\end{equation}
where $I_{ij}^{\operatorname{FB}}(\theta)$ is a matrix element of a Riemannian
metric tensor called the Fisher--Bures information matrix, given explicitly by
the following formula~\cite[Eqs.~(126)]{Sidhu2020}:
\begin{align}
I_{ij}^{\operatorname{FB}}(\theta)  &  \coloneqq \sum_{k,\ell}\frac{2}
{\lambda_{k}+\lambda_{\ell}} \langle k|\partial_{i}\sigma(\theta)|\ell
\rangle\!\langle\ell|\partial_{j}\sigma(\theta)|k\rangle
\label{eq:qfim-explicit}.
\end{align}
In the above, we make use of the shorthand
\begin{equation}
\partial_{i}\equiv\partial_{\theta_{i}}
\end{equation}
and a spectral decomposition $\rho(\theta)=\sum_{k}\lambda_{k}|k\rangle
\!\langle k|$, where in the latter notation we have suppressed the dependence
of the eigenvalue $\lambda_{k}$ and the eigenvector $|k\rangle$ on the
parameter vector $\theta$.

An alternative metric that one can consider is that induced by the quantum
relative entropy,  defined for states $\omega$ and $\tau$ as
\cite{umegaki1962ConditionalExpectationOperator}
\begin{equation}
D(\omega\Vert\tau)\coloneqq \sup_{\varepsilon > 0} \operatorname{Tr}[\omega(\ln\omega-\ln(\tau + \varepsilon I))].
\label{eq:rel-ent-def}
\end{equation}
 The quantum relative entropy
is a fundamental distinguishability measure in quantum information theory, due
to it obeying the data-processing inequality~\cite{Lindblad1975}\ and having
an operational meaning in the context of quantum hypothesis testing
\cite{hiai1991ProperFormulaRelative,nagaoka2000StrongConverseSteins}.

Following reasoning similar to that leading to~\eqref{eq:Bures-Fisher-metric},
the infinitesimal line element between two members of the parameterized family
$\left(  \sigma(\theta)\right)  _{\theta\in\mathbb{R}^{J}}$ is
\begin{equation}
D(\sigma(\theta)\Vert\sigma(\theta+d\theta))=\frac{1}{2}\sum_{i,j=1}^{J}
I_{ij}^{\operatorname{KM}}(\theta)\ d\theta_{i}\ d\theta_{j},
\label{eq:KM-metric-def}
\end{equation}
where $I_{ij}^{\operatorname{KM}}(\theta)$ is a matrix element of a Riemannian
metric tensor called the Kubo--Mori information matrix, given explicitly by
the following formulas~\cite[Eqs.~(B9), (B12), (B22), (B23)]
{sbahi2022provablyefficientvariationalgenerative}:
\begin{align}
I_{ij}^{\operatorname{KM}}(\theta)  &  \coloneqq\sum_{k,\ell}
c_{\operatorname{KM}}(\lambda_{k},\lambda_{\ell})\langle k|\partial_{i}
\sigma(\theta)|\ell\rangle\!\langle\ell|\partial_{j}\sigma(\theta)|k\rangle,
\label{eq:KM-inf-matrix-explicit}
\end{align}
where
\begin{equation}
c_{\operatorname{KM}}(x,y)\coloneqq\left\{
\begin{array}
[c]{cc}
\frac{1}{x} & \text{if }x=y\\
\frac{\ln x-\ln y}{x-y} & \text{if }x\neq y
\end{array}
\right.  .
\end{equation}

Generalizing the relative entropy is the $\alpha$-$z$ R\'enyi relative entropy \cite{Audenaert2015}, defined for states $\omega$ and $\tau$ and $\alpha\in (0,1)\cup (1,\infty)$ and $z>0$ as
\begin{equation}
    D_{\alpha,z}(\omega \| \tau) \coloneqq \sup _{\varepsilon>0}\frac{1}{\alpha-1}\ln\operatorname{Tr}\!\left[\left(\tau_{\varepsilon}^{\frac{1-\alpha}{2z}}\omega ^{\frac{\alpha}{z}}\tau_{\varepsilon}^{\frac{1-\alpha}{2z}}\right)^{z}\right],
\end{equation}
where $\tau_{\varepsilon} \coloneqq \tau + \varepsilon I$. This quantity generalizes many other distinguishability measures in quantum information theory and obeys the data-processing inequality for certain values of $\alpha$ and $z$ \cite[Theorem~1.2]{Zhang2020}.

Again following reasoning similar to that leading to~\eqref{eq:Bures-Fisher-metric},
the infinitesimal line element between two members of the parameterized family
$\left(  \sigma(\theta)\right)  _{\theta\in\mathbb{R}^{J}}$ is
\begin{equation}
D_{\alpha,z}(\sigma(\theta)\Vert\sigma(\theta+d\theta))=\frac{\alpha}{2}\sum_{i,j=1}^{J}
I_{ij}^{\alpha,z}(\theta)\ d\theta_{i}\ d\theta_{j},
\label{eq:a-z-metric-def}
\end{equation}
where $I_{ij}^{\alpha,z}(\theta)$ is a matrix element of a Riemannian
metric tensor called the $\alpha$-$z$ information matrix, given explicitly by
the following formula~\cite[Theorem~10]
{Wilde2025}:
\begin{align}
I_{ij}^{\alpha,z}(\theta)  &  \coloneqq\sum_{k,\ell}
c_{\alpha,z}(\lambda_{k},\lambda_{\ell})\langle k|\partial_{i}
\sigma(\theta)|\ell\rangle\!\langle\ell|\partial_{j}\sigma(\theta)|k\rangle,
\label{eq:a-z-inf-matrix-explicit}
\end{align}
where
\begin{multline}
c_{\alpha,z}(x,y)\coloneqq\\
\left\{
\begin{array}
[c]{cc}
\frac{1}{x} & \text{if }x=y\\[1em]
\frac{z}{\alpha\left(1-\alpha\right)}\left(\frac{x^{\frac{1-\alpha}{z}}-y^{\frac{1-\alpha}{z}}}{x-y}\right)\left(\frac{x^{\frac{\alpha}{z}}-y^{\frac{\alpha}{z}}}{x^{\frac{1}{z}}-y^{\frac{1}{z}}}\right) & \text{if }x\neq y
\end{array}
\right.  .
\end{multline}

\section{General thermal
states}

\label{sec:review-gen-thermal-states}

In this section, we recall some of the recent results from~\cite{patel2024naturalgradient}, and while doing so, we slightly generalize them such that they are suitable for our purposes here. Let
\begin{equation}
\theta\mapsto G(\theta)
\end{equation}
be an operator-valued differentiable function from $\theta\in\mathbb{R}^{J}$
to the set of Hermitian operators acting on a Hilbert space $\mathcal{H}$.
Then the following defines an operator-valued differentiable function from
$\theta\in\mathbb{R}^{J}$ to the set of thermal states acting on $\mathcal{H}
$:
\begin{equation}
\theta\mapsto\rho(\theta)\coloneqq\frac{e^{-G(\theta)}}{Z(\theta
)}.\label{eq:thermal-state}
\end{equation}
Let $\Phi_{\theta}$ denote the following quantum channel:
\begin{align}
\Phi_{\theta}(X) &  \coloneqq\int_{\mathbb{R}}dt\ p(t)\ e^{iG(\theta
)t}Xe^{-iG(\theta)t},\label{eq:Phi-channel}\\
p(t) &  \coloneqq\frac{2}{\pi}\ln\left\vert \coth\!\left(  \frac{\pi t}
{2}\right)  \right\vert ,\label{eq:high-peak-tent}
\end{align}
where $p(t)$ is a probability density function known as the high-peak tent
probability density~\cite{patel2024quantumboltzmann}.

\subsection{Derivative of a thermal state}

The following proposition presents an expression for the derivative of a thermal state, slightly generalizing that presented previously in~\cite[Eq.~(9)]
{Hastings2007},~\cite[Proposition~20]{Anshu2020arXiv}, and~\cite[Lemma~5]
{Coopmans2024}.

\begin{proposition}
[Derivative of a thermal state]\label{prop:deriv} For $\rho(\theta)$ defined
as in~\eqref{eq:thermal-state}, the following equality holds:
\begin{equation}
\partial_{j}\rho(\theta)=-\frac{1}{2}\left\{  \Phi_{\theta}(\partial
_{j}G(\theta)),\rho(\theta)\right\}  +\rho(\theta)\left\langle \partial
_{j}G(\theta)\right\rangle _{\rho(\theta)},
\label{eq:thermal-state-deriv}
\end{equation}
where the channel $\Phi_{\theta}$ is defined in~\eqref{eq:Phi-channel}.

\end{proposition}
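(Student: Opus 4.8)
The plan is to prove the formula by differentiating the thermal state $\rho(\theta) = e^{-G(\theta)}/Z(\theta)$ directly, handling the non-commutativity of $G(\theta)$ with its derivative $\partial_j G(\theta)$ through an integral representation of the derivative of the matrix exponential. First I would write $\partial_j \rho(\theta) = \frac{\partial_j e^{-G(\theta)}}{Z(\theta)} - \frac{e^{-G(\theta)} \, \partial_j Z(\theta)}{Z(\theta)^2}$. The second term is the easy one: since $Z(\theta) = \operatorname{Tr}[e^{-G(\theta)}]$, one finds $\partial_j Z(\theta) = -\operatorname{Tr}[e^{-G(\theta)} \partial_j G(\theta)]$ (using the cyclicity of the trace to avoid ordering issues), so that this term contributes exactly $+\rho(\theta) \langle \partial_j G(\theta)\rangle_{\rho(\theta)}$, matching the last term in \eqref{eq:thermal-state-deriv}.

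The substance of the proof is the first term. I would invoke the standard Duhamel/Frechet derivative formula for the matrix exponential,
\begin{equation}
\partial_j e^{-G(\theta)} = -\int_0^1 ds \, e^{-(1-s)G(\theta)} \, (\partial_j G(\theta)) \, e^{-s G(\theta)},
\end{equation}
and then symmetrize it into a form involving the anticommutator with $\rho(\theta)$. Dividing by $Z(\theta)$ and writing $e^{-s G} = \rho^{s} Z$ lets me express this as an integral of $\rho^{1-s}(\partial_j G)\rho^{s}$ against $ds$. The goal is to recognize this integral, after a spectral-decomposition change of variables, as $-\tfrac12\{\Phi_\theta(\partial_j G), \rho\}$. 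Concretely, in the eigenbasis $G = \sum_k E_k |k\rangle\langle k|$ (so $\rho = \sum_k \lambda_k |k\rangle\langle k|$ with $\lambda_k = e^{-E_k}/Z$), the matrix element of the Duhamel integral picks up a scalar weight depending on $E_k - E_\ell$, and I must show this weight equals the corresponding weight produced by $-\tfrac12\{\Phi_\theta(\cdot),\rho\}$.

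The main obstacle is verifying that the channel $\Phi_\theta$, with its high-peak tent density $p(t)$ from \eqref{eq:high-peak-tent}, reproduces exactly the correct scalar weights. The action of $\Phi_\theta$ on the matrix element $|k\rangle\langle\ell|$ multiplies it by the Fourier-type integral $\int_{\mathbb R} dt \, p(t) \, e^{i(E_k - E_\ell)t}$, so I would need to compute this integral in closed form and confirm that when combined with the anticommutator weights $\tfrac12(\lambda_k + \lambda_\ell)$ it matches the Duhamel weight $\int_0^1 ds\, \lambda_k^{1-s}\lambda_\ell^{s}$ for each pair $(k,\ell)$. This reduces to a scalar identity: I expect the Fourier transform of $p(t)$ to yield a factor of the form $\tanh$ or $(E_k - E_\ell)/\sinh(\cdot)$ that precisely cancels against the hyperbolic structure of $(\lambda_k - \lambda_\ell)/(\lambda_k + \lambda_\ell)$, so that both sides agree on the diagonal $k=\ell$ and off-diagonal $k\neq\ell$ cases. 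Establishing this scalar identity is the crux; the operator manipulations before and after it are routine. Since this is a slight generalization of results already cited (\cite{Hastings2007,Anshu2020arXiv,Coopmans2024}), I would follow their computation of the $p(t)$ integral and merely track the $\theta$-dependence through $G(\theta)$.
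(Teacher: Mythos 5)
Your proposal is correct and takes essentially the same route as the paper's own proof: Duhamel's formula, passage to the eigenbasis of $G(\theta)$, and the Fourier-transform identity $\int_{\mathbb{R}} dt\, p(t)e^{-i\omega t}=\tanh(\omega/2)/(\omega/2)$, which combined with the hyperbolic rewriting in~\eqref{eq:tanh-intro} is exactly the scalar identity you flag as the crux (and it does hold, with the Duhamel weight $\int_0^1 \lambda_k^{1-s}\lambda_\ell^{s}\,ds=(\lambda_k-\lambda_\ell)/(\ln\lambda_k-\ln\lambda_\ell)$ matching $\tfrac12(\lambda_k+\lambda_\ell)$ times the $\tanh$ factor). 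The only cosmetic differences are that you treat the $1/Z(\theta)$ normalization term explicitly via the quotient rule, whereas the paper's appendix computes only $\partial_j e^{-G(\theta)}$ and leaves that step implicit, and you have a harmless typo ($e^{-sG}=\rho^s Z$ should be $\rho^s Z^s$) that does not affect the conclusion.
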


\begin{proof}
See Appendix~\ref{app:deriv-proof}.
\end{proof}

\subsection{Fisher--Bures, Kubo--Mori, and $\alpha$-$z$ information matrices}

Here we present expressions for the matrix elements of the Fisher--Bures and Kubo--Mori information matrices of thermal states, generalizing those presented recently in \cite[Theorems~1 and 2]{patel2024naturalgradient}.

\begin{proposition}
\label{prop:FB-KM-info-mats} The Fisher--Bures and Kubo--Mori Fisher
information matrix elements of the parameterized family $\left(  \rho
(\theta)\right)  _{\theta\in\mathbb{R}^{J}}$, with $\rho(\theta)$ defined as
in~\eqref{eq:thermal-state}, are as follows:
\begin{align}
I_{ij}^{\operatorname{FB}}(\theta)  &  =\frac{1}{2}\left\langle \left\{
\Phi_{\theta}(\partial_{i}G(\theta)),\Phi_{\theta}(\partial_{j}G(\theta
))\right\}  \right\rangle _{\rho(\theta)}\nonumber\\
&  \qquad-\left\langle \partial_{i}G(\theta)\right\rangle _{\rho(\theta
)}\left\langle \partial_{j}G(\theta)\right\rangle _{\rho(\theta)},\label{eq:FB-info-mat-gen-thermal}\\ 
I_{ij}^{\operatorname{KM}}(\theta)  &  =\frac{1}{2}\left\langle \left\{
\partial_{i}G(\theta),\Phi_{\theta}(\partial_{j}G(\theta))\right\}
\right\rangle _{\rho(\theta)}\nonumber\\
&  \qquad-\left\langle \partial_{i}G(\theta)\right\rangle _{\rho(\theta
)}\left\langle \partial_{j}G(\theta)\right\rangle _{\rho(\theta)},
\label{eq:KM-info-mat-gen-thermal}
\end{align}
where the channel $\Phi_{\theta}$ is defined in~\eqref{eq:Phi-channel}.

\end{proposition}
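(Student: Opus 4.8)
The plan is to derive both information-matrix formulas from the explicit spectral definitions in~\eqref{eq:qfim-explicit} and~\eqref{eq:KM-inf-matrix-explicit} by inserting the derivative formula from Proposition~\ref{prop:deriv}. First I would substitute~\eqref{eq:thermal-state-deriv} into the spectral sums. The key observation is that the term $\rho(\theta)\langle\partial_j G(\theta)\rangle_{\rho(\theta)}$ in the derivative is proportional to $\rho(\theta)$ itself; since $\partial_i\rho$ and $\partial_j\rho$ appear in a sum weighted by the Bogoliubov/relative-entropy weight functions, these ``trace-correction'' terms will combine to produce exactly the subtracted product $-\langle\partial_i G\rangle_{\rho}\langle\partial_j G\rangle_{\rho}$ in both formulas. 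So the main work is evaluating the contribution of the anticommutator term $-\tfrac12\{\Phi_\theta(\partial_j G),\rho\}$.

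To handle that term cleanly, I would work in the eigenbasis $\rho(\theta)=\sum_k\lambda_k|k\rangle\!\langle k|$ and compute the matrix elements $\langle k|\partial_j\rho|\ell\rangle$. The crucial step is to show that the action of the channel $\Phi_\theta$, defined in~\eqref{eq:Phi-channel}, is diagonal in this eigenbasis in the appropriate sense: because $\Phi_\theta$ averages conjugations $e^{iG t}Xe^{-iGt}$ against the density $p(t)$, and $G$ is diagonal in the $|k\rangle$ basis (as $\rho=e^{-G}/Z$), the matrix element $\langle k|\Phi_\theta(X)|\ell\rangle$ equals $\tilde{p}(g_\ell-g_k)\langle k|X|\ell\rangle$, where $g_k\coloneqq-\ln(Z\lambda_k)$ are the eigenvalues of $G$ and $\tilde p$ is the Fourier transform of $p$. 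The identity I want to invoke is that this Fourier transform is precisely $\tilde p(\omega)=\tfrac{2}{\omega}\tanh(\omega/2)$ (this is the defining property of the high-peak tent density), so that $\langle k|\Phi_\theta(\partial_j G)|\ell\rangle$ carries exactly the factor that converts the raw matrix element into the combination needed to match the Bogoliubov weight $\tfrac{2}{\lambda_k+\lambda_\ell}$ for Fisher--Bures and the Kubo--Mori weight $c_{\operatorname{KM}}(\lambda_k,\lambda_\ell)$.

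With the matrix elements in hand, the computation reduces to collecting, for each pair $(k,\ell)$, the product of weight functions and verifying the algebraic identities. For Fisher--Bures the anticommutator $\tfrac12\{\Phi_\theta(\partial_i G),\Phi_\theta(\partial_j G)\}$ produces two factors of $\tilde p$, and I expect the identity $\tfrac{\lambda_k+\lambda_\ell}{2}\,\tilde p(g_\ell-g_k)^2 = \tfrac{2}{\lambda_k+\lambda_\ell}\cdot(\text{something})$ to collapse correctly once one uses $\lambda_k=e^{-g_k}/Z$; concretely the $\tanh$ factors are engineered so that the double-$\Phi$ expression reproduces $\tfrac{2}{\lambda_k+\lambda_\ell}$. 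For Kubo--Mori only one factor of $\Phi_\theta$ appears (the other derivative contributes the bare $\partial_i G$), so a single factor of $\tilde p(g_\ell-g_k)$ must reproduce $c_{\operatorname{KM}}(\lambda_k,\lambda_\ell)=\tfrac{\ln\lambda_k-\ln\lambda_\ell}{\lambda_k-\lambda_\ell}$; here the matching should follow from $g_\ell-g_k=\ln\lambda_k-\ln\lambda_\ell$ together with the $\tanh$-over-argument form of $\tilde p$.

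The main obstacle I anticipate is the careful bookkeeping of the Fourier-transform identity for $p(t)$ and confirming that the resulting weight functions match exactly---in particular tracking factors of $2$ and signs, and confirming that the diagonal ($k=\ell$) terms are handled correctly (where the Fisher--Bures weight is $1/\lambda_k$, the Kubo--Mori weight is $1/\lambda_k$, and $\tilde p(0)$ must equal the correct limiting value). A secondary subtlety is justifying that the $\rho\langle\partial_j G\rangle_\rho$ corrections contribute only the product term and nothing to the off-diagonal weights; this should follow because that term is supported on the diagonal in the eigenbasis, so it only affects $k=\ell$ contributions and assembles into the separable product $\langle\partial_i G\rangle_\rho\langle\partial_j G\rangle_\rho$. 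Since the two derivations share this structure, I would present Fisher--Bures in full and note that Kubo--Mori follows by replacing the symmetric Bogoliubov weight with the logarithmic weight and one factor of $\Phi_\theta$ with the identity, as already flagged in the statement of the proposition.
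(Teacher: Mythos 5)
Your overall skeleton matches the paper's proof: substitute the derivative formula of Proposition~\ref{prop:deriv} into the spectral definitions~\eqref{eq:qfim-explicit} and~\eqref{eq:KM-inf-matrix-explicit}, observe that the $\rho(\theta)\left\langle \partial_{j}G(\theta)\right\rangle_{\rho(\theta)}$ pieces are diagonal in the eigenbasis and assemble (two cross terms contributing $-\left\langle \partial_{i}G\right\rangle_{\rho}\left\langle \partial_{j}G\right\rangle_{\rho}$ each, plus one diagonal--diagonal term contributing $+\left\langle \partial_{i}G\right\rangle_{\rho}\left\langle \partial_{j}G\right\rangle_{\rho}$) into the single subtracted product, and use the Fourier characterization $\int_{\mathbb{R}}dt\,p(t)e^{-i\omega t}=\tanh(\omega/2)/(\omega/2)$ in the eigenbasis of $G(\theta)$ to strip one factor of $\Phi_{\theta}$ in the Kubo--Mori case. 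That last mechanism is exactly how the paper handles $I^{\operatorname{KM}}$, and your matching of $c_{\operatorname{KM}}$ against the $\tanh$-over-argument weight via $g_{\ell}-g_{k}=\ln\lambda_{k}-\ln\lambda_{\ell}$ is the right ingredient (strictly, it is the combination $c_{\operatorname{KM}}(\lambda_k,\lambda_\ell)\cdot\tfrac{1}{4}(\lambda_k+\lambda_\ell)^2\cdot\tilde p(g_\ell-g_k)$ that collapses to $\tfrac{1}{2}(\lambda_k+\lambda_\ell)$, not $\tilde p$ alone reproducing $c_{\operatorname{KM}}$).

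Your description of the Fisher--Bures step, however, is off, and the identity you gesture at is false. No Fourier-transform identity is needed there: since $\langle k|\partial_{i}\rho(\theta)|\ell\rangle=-\tfrac{1}{2}(\lambda_{k}+\lambda_{\ell})\langle k|\Phi_{\theta}(\partial_{i}G(\theta))|\ell\rangle+\delta_{k\ell}\lambda_{k}\left\langle \partial_{i}G(\theta)\right\rangle_{\rho(\theta)}$, the product of the two leading terms carries $\tfrac{1}{4}(\lambda_{k}+\lambda_{\ell})^{2}$, which cancels against the weight $2/(\lambda_{k}+\lambda_{\ell})$ to leave $\tfrac{1}{2}(\lambda_{k}+\lambda_{\ell})$, and the sum over $k,\ell$ reassembles directly into $\tfrac{1}{2}\left\langle \{\Phi_{\theta}(\partial_{i}G(\theta)),\Phi_{\theta}(\partial_{j}G(\theta))\}\right\rangle_{\rho(\theta)}$ with both channels intact. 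Nothing has to ``reproduce'' the Bogoliubov weight; and the quantity $\tfrac{\lambda_{k}+\lambda_{\ell}}{2}\,\tilde p(g_{\ell}-g_{k})^{2}$ equals $2(\lambda_{k}-\lambda_{\ell})^{2}/\left[(g_{\ell}-g_{k})^{2}(\lambda_{k}+\lambda_{\ell})\right]$, which is not $2/(\lambda_{k}+\lambda_{\ell})$ in general, so the ``engineered'' collapse you describe would not go through as stated. Relatedly, the claim that for Kubo--Mori ``the other derivative contributes the bare $\partial_{i}G$'' reads the target formula backwards: both $\partial_{i}\rho$ and $\partial_{j}\rho$ contribute a $\Phi_{\theta}$ via Proposition~\ref{prop:deriv}, and the bare $\partial_{i}G$ appears only after the Fourier identity is used to undo one of them. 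Dropping the $\tilde p$ machinery from the Fisher--Bures computation (retaining it only for Kubo--Mori) turns your outline into the paper's proof.
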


\begin{proof}
See Appendix~\ref{app:FB-KM-proofs}.
\end{proof}

\medskip

By defining the following channel:
\begin{align}
\Psi_{\theta}(X)  &  \coloneqq\int_{\mathbb{R}}dt\ q(t)\ e^{iG(\theta
)t}Xe^{-iG(\theta)t},\label{eq:Psi-ch}\\
q(t)  &  \coloneqq\int_{\mathbb{R}}d\tau\ p(\tau)p(t+\tau),
\label{eq:convolve-high-peak-tent}
\end{align}
we have the following alternative form for the Fisher--Bures information
matrix elements of thermal states, which is helpful in arriving at simpler expressions later on for bosonic Gaussian thermal states:

\begin{proposition}
\label{prop:FB-info-mats-alt} The Fisher--Bures information matrix elements of
the parameterized family $\left(  \rho(\theta)\right)  _{\theta\in
\mathbb{R}^{J}}$, with $\rho(\theta)$ defined as in~\eqref{eq:thermal-state},
are as follows:
\begin{multline}
I_{ij}^{\operatorname{FB}}(\theta)=\frac{1}{2}\left\langle \left\{
\partial_{i}G(\theta)),\Psi_{\theta}(\partial_{j}G(\theta))\right\}
\right\rangle _{\rho(\theta)}\label{eq:FB-info-mat-gen-thermal-alt}\\
-\left\langle \partial_{i}G(\theta)\right\rangle _{\rho(\theta)}\left\langle
\partial_{j}G(\theta)\right\rangle _{\rho(\theta)},
\end{multline}
where $\Psi_{\theta}$ is the channel defined in~\eqref{eq:Psi-ch}.

\end{proposition}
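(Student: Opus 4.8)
The plan is to start from the first form of the Fisher--Bures information matrix established in Proposition~\ref{prop:FB-KM-info-mats}, namely that $I_{ij}^{\operatorname{FB}}(\theta)$ equals $\frac{1}{2}\langle\{\Phi_\theta(\partial_i G(\theta)),\Phi_\theta(\partial_j G(\theta))\}\rangle_{\rho(\theta)}$ minus $\langle\partial_i G(\theta)\rangle_{\rho(\theta)}\langle\partial_j G(\theta)\rangle_{\rho(\theta)}$, and to show that the first (quadratic) term can be rewritten as $\frac{1}{2}\langle\{\partial_i G(\theta),\Psi_\theta(\partial_j G(\theta))\}\rangle_{\rho(\theta)}$; the second term is literally identical in~\eqref{eq:FB-info-mat-gen-thermal-alt}, so it needs no attention. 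Throughout I would exploit two structural facts: first, the unitaries $e^{iG(\theta)t}$ commute with $\rho(\theta)=e^{-G(\theta)}/Z(\theta)$, since both are functions of $G(\theta)$; and second, the density $p(t)$ in~\eqref{eq:high-peak-tent} is an even function of $t$ (as $\coth$ is odd and an absolute value is taken), from which it follows that the convolution $q$ in~\eqref{eq:convolve-high-peak-tent} is even as well.

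Writing $A\coloneqq\partial_i G(\theta)$ and $B\coloneqq\partial_j G(\theta)$ for brevity, the core computation is to evaluate $\operatorname{Tr}[\rho(\theta)\,\Phi_\theta(A)\,\Phi_\theta(B)]$. Substituting the integral definition~\eqref{eq:Phi-channel} for each channel yields a double integral over $t$ and $s$ of $p(t)\,p(s)\,\operatorname{Tr}[\rho(\theta)\,e^{iG(\theta)t}Ae^{-iG(\theta)t}e^{iG(\theta)s}Be^{-iG(\theta)s}]$. Using cyclicity of the trace together with $[\rho(\theta),e^{iG(\theta)s}]=0$, the trace collapses to $\operatorname{Tr}[\rho(\theta)\,e^{iG(\theta)(t-s)}Ae^{-iG(\theta)(t-s)}B]$, which depends on $t$ and $s$ only through the difference $u\coloneqq t-s$. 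Changing variables from $(t,s)$ to $(u,s)$ and integrating out $s$ reproduces exactly the kernel $q(u)=\int d\tau\,p(\tau)p(u+\tau)$, giving the identity $\operatorname{Tr}[\rho(\theta)\,\Phi_\theta(A)\,\Phi_\theta(B)]=\operatorname{Tr}[\rho(\theta)\,\Psi_\theta(A)\,B]$. This contraction---that the product of two $\Phi_\theta$ channels inside the state expectation reduces to a single application of $\Psi_\theta$ via the convolution $q=p\ast p$---is the technical heart of the argument, and I expect the careful bookkeeping of the change of variables, and the verification that the induced kernel is precisely the $q$ of~\eqref{eq:convolve-high-peak-tent}, to be the main obstacle.

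It then remains to symmetrize. Applying the identity to both orderings appearing in the anticommutator gives $\frac{1}{2}\langle\{\Phi_\theta(A),\Phi_\theta(B)\}\rangle_{\rho(\theta)}=\frac{1}{2}(\operatorname{Tr}[\rho(\theta)\,\Psi_\theta(A)\,B]+\operatorname{Tr}[\rho(\theta)\,\Psi_\theta(B)\,A])$. I would then establish the self-adjointness relation $\operatorname{Tr}[\rho(\theta)\,\Psi_\theta(A)\,B]=\operatorname{Tr}[\rho(\theta)\,A\,\Psi_\theta(B)]$, which follows from the evenness of $q$ and the commutation $[\rho(\theta),e^{iG(\theta)u}]=0$ by the substitution $u\to-u$ under the trace. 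Substituting this into the first contribution recombines the two terms into $\frac{1}{2}\operatorname{Tr}[\rho(\theta)(A\,\Psi_\theta(B)+\Psi_\theta(B)\,A)]=\frac{1}{2}\langle\{A,\Psi_\theta(B)\}\rangle_{\rho(\theta)}$, which is exactly the quadratic term in~\eqref{eq:FB-info-mat-gen-thermal-alt}. Since the subtracted term $\langle A\rangle_{\rho(\theta)}\langle B\rangle_{\rho(\theta)}$ is untouched throughout, this completes the derivation.
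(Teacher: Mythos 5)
Your proposal is correct and follows essentially the same route as the paper's proof: both substitute the integral form of $\Phi_\theta$, use $[e^{iG(\theta)t},\rho(\theta)]=0$ together with cyclicity of the trace to collapse the double time integral, and identify the resulting kernel with the convolution $q$ of \eqref{eq:convolve-high-peak-tent}. The only cosmetic difference is that you treat the two orderings of the anticommutator separately and then recombine them via the evenness of $q$ (i.e., the self-adjointness of $\Psi_\theta$ in the $\rho(\theta)$-weighted pairing), whereas the paper keeps the anticommutator intact so that both conjugations land on $\partial_j G(\theta)$ at once and no evenness argument is needed.
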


\begin{proof}
See Appendix~\ref{app:alt-form-FB-inf-mat}.
\end{proof}

\medskip 

Generalizing both of these expressions is that for the $\alpha$-$z$ information matrix. The following proposition is a consequence of \cite[Theorem~32]{Wilde2025} and a similar generalization used to prove Proposition~\ref{prop:FB-KM-info-mats} (i.e., replacing $G_i$ with $\partial_i G(\theta)$ throughout).

\begin{proposition}
    The $\alpha$-$z$ information matrix elements of the parameterized family $\left(  \rho(\theta)\right)  _{\theta\in
\mathbb{R}^{J}}$, with $\rho(\theta)$ defined as in~\eqref{eq:thermal-state},
are as follows for all $\alpha\in\left(0,1\right)$ and $z>0$:
\begin{multline}
I^{\alpha,z}(\theta)_{i,j}=\frac{1}{2}\left\langle \left\{ \Phi_{q_{\alpha,z},\theta}\!\left(\partial_i G(\theta)\right),\partial_j G(\theta)\right\} \right\rangle _{\rho(\theta)}\\
-\left\langle \partial_i G(\theta)\right\rangle _{\rho(\theta)}\left\langle \partial_j G(\theta)\right\rangle _{\rho(\theta)},
\end{multline}
where the channel $\Phi_{q_{\alpha,z},\theta}$ is given by
\begin{equation}
\Phi_{q_{\alpha,z},\theta}\!\left(X\right)\coloneqq\int_{-\infty}^{\infty}dt\:q_{\alpha,z}(t)e^{-itG(\theta)}Xe^{itG(\theta)},
\end{equation}
and $q_{\alpha,z}$ is the following probability density function:
\begin{equation}
q_{\alpha,z}(t)\coloneqq\left(p*p_{\alpha,z}\right)(t)=\int_{-\infty}^{\infty}d\tau\:p(\tau)p_{\alpha,z}(t-\tau),\label{eq:q-a-z-prob-dens}
\end{equation}
where the probability density function $p$ is defined in \eqref{eq:high-peak-tent} and the probability density function $p_{\alpha,z}$ is
defined on $t\in\mathbb{R}$ as
\begin{align}
p_{\alpha,z}(t) & \coloneqq\frac{z}{2\pi\alpha\left(1-\alpha\right)}\ln\!\left(1+\left(\frac{\sin(\pi\alpha)}{\sinh(\pi zt)}\right)^{2}\right)\label{eq:high-peak-tent-a-z}.
\end{align}
\end{proposition}

\begin{remark}
    Note that $q_{\alpha,z}(t)$ is an even function because it is the convolution of two even functions. This implies that
    \begin{equation}
\Phi_{q_{\alpha,z},\theta}\!\left(X\right)=\int_{-\infty}^{\infty}dt\:q_{\alpha,z}(t)\:e^{itG(\theta)}Xe^{-itG(\theta)}.
\end{equation}
Furthermore, we also have that 
\begin{multline}
I^{\alpha,z}(\theta)_{i,j}=\frac{1}{2}\left\langle \left\{ \partial_i G(\theta),\Phi_{q_{\alpha,z},\theta}\!\left(\partial_j G(\theta)\right)\right\} \right\rangle _{\rho(\theta)}\\
-\left\langle \partial_i G(\theta)\right\rangle _{\rho(\theta)}\left\langle \partial_j G(\theta)\right\rangle _{\rho(\theta)},
\end{multline}
because $G(\theta)$ and $\rho(\theta)$ commute.
\end{remark}

\subsection{Symmetric logarithmic derivative of a thermal state}

 The symmetric logarithmic derivative (SLD) operator $L^{(j)}(\theta)$ corresponding to $\theta_j$ is generally defined for a paramaterized family $(\sigma(\theta))_{\theta \in \mathbb{R}^J}$ as follows~\cite[Eq.~(86)]{Sidhu2020}:
\begin{equation}
    L^{(j)}(\theta) \coloneqq \sum_{k,\ell} \frac{2}{\lambda
_{k}+\lambda_{\ell}}|k\rangle \!\langle k|(\partial_{j}\sigma(\theta
))|\ell\rangle \!\langle \ell | .
\end{equation}
Note that the SLD satisfies the following differential equation~\cite[Eq.~(83)]{Sidhu2020}:
\begin{equation}
    \partial_{j}\sigma(\theta
) = \frac{1}{2} \left\{ L^{(j)}(\theta), \sigma(\theta)\right\}.
\label{eq:SLD-diff-eq}
\end{equation}

For thermal states of the form in~\eqref{eq:thermal-state}, we have the following result, representing a slight generalization of the recent finding of~\cite[Theorem~3]{patel2024naturalgradient}:

\begin{proposition}
    \label{prop:main-sld}For the parameterized family of thermal states in
\eqref{eq:thermal-state}, the SLD operator $L^{(j)}(\theta)$
is given as follows:
\begin{equation}
L^{(j)}(\theta)=-\Phi_{\theta}(\partial_j G(\theta)) + \left\langle
\partial_j G(\theta)\right\rangle _{\rho(\theta)} I,\label{eq:SLD-op}
\end{equation}
where $\Phi_{\theta}$ is the quantum channel defined in~\eqref{eq:Phi-channel}.
\end{proposition}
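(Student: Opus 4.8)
The plan is to verify directly that the operator proposed in~\eqref{eq:SLD-op} satisfies the defining differential equation~\eqref{eq:SLD-diff-eq} for the SLD, and then to invoke uniqueness of its Hermitian solution. This route is appealing because the derivative formula in Proposition~\ref{prop:deriv} is already written in the anticommutator form appearing in~\eqref{eq:SLD-diff-eq}, so the matching should be essentially immediate and requires no new spectral calculation.

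Concretely, I would substitute $L^{(j)}(\theta)=-\Phi_{\theta}(\partial_j G(\theta))+\left\langle \partial_j G(\theta)\right\rangle_{\rho(\theta)} I$ into the right-hand side of~\eqref{eq:SLD-diff-eq}. Using linearity of the anticommutator, the elementary identity $\{I,\rho(\theta)\}=2\rho(\theta)$, and the fact that $\left\langle \partial_j G(\theta)\right\rangle_{\rho(\theta)}$ is a scalar, one finds
\begin{align}
\frac{1}{2}\left\{ L^{(j)}(\theta),\rho(\theta)\right\}
&=-\frac{1}{2}\left\{ \Phi_{\theta}(\partial_j G(\theta)),\rho(\theta)\right\} \nonumber\\
&\qquad+\rho(\theta)\left\langle \partial_j G(\theta)\right\rangle_{\rho(\theta)}.
\end{align}
By Proposition~\ref{prop:deriv} (see~\eqref{eq:thermal-state-deriv}), the right-hand side is exactly $\partial_j\rho(\theta)$, so the proposed operator solves~\eqref{eq:SLD-diff-eq}.

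It then remains to argue that solving~\eqref{eq:SLD-diff-eq} identifies the proposed operator with the SLD. First, the proposed operator is Hermitian: since $G(\theta)$ is Hermitian-valued, $\partial_j G(\theta)$ is Hermitian, and $\Phi_{\theta}$ preserves Hermiticity (each conjugation $e^{iG(\theta)t}(\cdot)e^{-iG(\theta)t}$ does, and $p(t)$ is real), while $\left\langle \partial_j G(\theta)\right\rangle_{\rho(\theta)}$ is real. Second, since $\rho(\theta)$ is positive definite, equation~\eqref{eq:SLD-diff-eq} has a \emph{unique} Hermitian solution: in an eigenbasis $\{|k\rangle\}$ of $\rho(\theta)$ with eigenvalues $\lambda_k>0$, the superoperator $L\mapsto\frac{1}{2}\{L,\rho(\theta)\}$ acts entrywise as multiplication of $\langle k|L|\ell\rangle$ by $(\lambda_k+\lambda_\ell)/2>0$, hence is invertible. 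Because the explicit SLD defined above~\eqref{eq:SLD-diff-eq} is by construction a Hermitian solution of~\eqref{eq:SLD-diff-eq}, uniqueness forces it to coincide with the proposed operator, which completes the argument.

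The computation itself presents no genuine obstacle given Proposition~\ref{prop:deriv}; the only point requiring mild care is the uniqueness-and-Hermiticity step, namely confirming that~\eqref{eq:SLD-diff-eq} pins down the SLD uniquely on positive definite states. As an alternative that sidesteps the uniqueness argument entirely, I could expand the explicit sum defining $L^{(j)}(\theta)$ in the eigenbasis of $\rho(\theta)$ and show term by term that it reproduces $-\Phi_{\theta}(\partial_j G(\theta))+\left\langle \partial_j G(\theta)\right\rangle_{\rho(\theta)} I$ via the spectral representation of $\Phi_{\theta}$; I expect the differential-equation route above to be cleaner and shorter.
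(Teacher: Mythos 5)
Your proof is correct, but it takes a different route from the paper's. The paper proves Proposition~\ref{prop:main-sld} by the very ``alternative'' you mention at the end: it writes the matrix elements $L^{(j)}_{k\ell}(\theta)=\tfrac{2}{\lambda_k+\lambda_\ell}\langle k|\partial_j\rho(\theta)|\ell\rangle$ in the eigenbasis of $\rho(\theta)$, substitutes the expression for $\langle k|\partial_j\rho(\theta)|\ell\rangle$ obtained from Proposition~\ref{prop:deriv}, cancels the factor $\lambda_k+\lambda_\ell$, and reassembles the operator. Your route instead verifies that the candidate operator satisfies the defining equation~\eqref{eq:SLD-diff-eq} (a two-line computation from~\eqref{eq:thermal-state-deriv}) and then appeals to invertibility of the superoperator $L\mapsto\tfrac12\{L,\rho(\theta)\}$ on positive definite states. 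Both arguments are sound and both ultimately rest on Proposition~\ref{prop:deriv}. What your approach buys is brevity and the avoidance of any spectral bookkeeping; the small price is that you must supply the uniqueness step explicitly, which you do correctly (and in fact Hermiticity is not even needed there, since $(\lambda_k+\lambda_\ell)/2>0$ makes the map invertible on \emph{all} operators, so your Hermiticity check, while harmless, is dispensable). The paper's computation, by contrast, derives the formula constructively from the definition of the SLD without invoking uniqueness, at the cost of an index-level calculation.
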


\begin{proof}
See Appendix~\ref{app:proof-SLD}.
\end{proof}

\medskip 

Observe that the equality in~\eqref{eq:SLD-diff-eq} holds for~\eqref{eq:thermal-state-deriv} and~\eqref{eq:SLD-op}. 

\section{Fisher--Bures, Kubo--Mori, and $\alpha$-$z$ information matrix elements of bosonic
Gaussian thermal states}

\label{sec:main-FB-KM-bosonic-G}

In this section, we present our formulas for the Fisher--Bures and Kubo--Mori information matrices for
the parameterized family $\left(  \rho(\mu,H)\right)  _{\mu,H}$ of bosonic
Gaussian thermal states. Observe that this family is a special case of the
more general family of thermal states in~\eqref{eq:thermal-state}, where
\begin{align}
\theta &  =\left(  \mu,H\right)  ,\label{eq:Gauss-sub-1}\\
G(\theta) &  =\frac{1}{2}(\hat{x}-\mu)^{T}H(\hat{x}-\mu)=\frac{1}{2}(\hat
{x}^{c})^{T}H\hat{x}^{c},\label{eq:Gauss-sub-2}
\end{align}
with $\hat{x}^{c}\equiv\hat{x}-\mu$. As such, we proceed by plugging
into the formulas from Propositions~\ref{prop:FB-KM-info-mats} and
\ref{prop:FB-info-mats-alt} and simplifying, in order to determine expressions for the
Fisher--Bures and Kubo--Mori information matrix elements. In order to do so,
it is necessary to determine the partial derivatives $\frac{\partial}
{\partial\theta_{j}}G(\theta)$. They are as follows:
\begin{align}
\frac{\partial}{\partial h_{k,\ell}}\left(  \frac{1}{2}(\hat{x}-\mu)^{T}
H(\hat{x}-\mu)\right)   &  =\frac{1}{4}\left\{  \hat{x}_{k}-\mu_{k},\hat
{x}_{\ell}-\mu_{\ell}\right\}  \nonumber\\
&  =\frac{1}{4}\left\{  \hat{x}_{k}^{c},\hat{x}_{\ell}^{c}\right\}
,\label{eq:derivs-Gaussian-Ham}\\
\frac{\partial}{\partial\mu_{m}}\left(  \frac{1}{2}(\hat{x}-\mu)^{T}H(\hat
{x}-\mu)\right)   &  =-\sum_{j=1}^{2n}h_{m,j}(\hat{x}_{j}-\mu_{j})\nonumber\\
&  =-\sum_{j=1}^{2n}h_{m,j}\hat{x}_{j}^{c},\label{eq:derivs-Gaussian-Ham-2}
\end{align}
To arrive at~\eqref{eq:derivs-Gaussian-Ham}, we made use of
\eqref{eq:sym-exp-form} and  the shorthand in~\eqref{eq:centered-quad-op}.

When denoting the Fisher--Bures or Kubo--Mori information matrices of bosonic Gaussian thermal states, we adopt a particular indexing scheme, which has to do with the fact that the parameter vector in this case is given by $\theta = (\mu, H)$, with $\mu$ a vector and $H$ a matrix. We adopt the convention of flattening the matrix $H$ to a vector, so that
\begin{equation}
    \theta = (\mu_1, \ldots, \mu_{2n}, h_{1,1}, \ldots, h_{2n,1}, \ldots, h_{1,2n}, \ldots, h_{2n,2n}).
\end{equation}
Then we index the Fisher information matrix elements by $I_{m_1, m_2}$ for the elements of the upper left block, $I_{m, (k,\ell)}$ for the elements of the upper right block, $I_{(k,\ell),m}$ for the elements of the lower left block, and $I_{(k_1, \ell_1),(k_2, \ell_2)}$ for the elements of the lower right block. For example, for a single-mode state, the parameter vector $\theta$ is given by $\theta = (\mu_1, \mu_2, h_{1,1}, h_{2,1}, h_{1,2}, h_{2,2})$ and the Fisher information matrix elements by
{\small
\begin{equation}
    \begin{bmatrix}
        I_{1,1} & I_{1,2} & I_{1,(1,1)} & I_{1,(1,2)} & I_{1,(2,1)} & I_{1,(2,2)}\\
        I_{2,1} & I_{2,2} & I_{2,(1,1)} & I_{2,(1,2)} & I_{2,(2,1)} & I_{2,(2,2)}\\
        I_{(1,1),1} & I_{(1,1),2} & I_{(1,1),(1,1)} & I_{(1,1),(1,2)} & I_{(1,1),(2,1)} & I_{(1,1),(2,2)}\\
        I_{(1,2),1} & I_{(1,2),2} & I_{(1,2),(1,1)} & I_{(1,2),(1,2)} & I_{(1,2),(2,1)} & I_{(1,2),(2,2)}\\
        I_{(2,1),1} & I_{(2,1),2} & I_{(2,1),(1,1)} & I_{(2,1),(1,2)} & I_{(2,1),(2,1)} & I_{(2,1),(2,2)}\\
        I_{(2,2),1} & I_{(2,2),2} & I_{(2,2),(1,1)} & I_{(2,2),(1,2)} & I_{(2,2),(2,1)} & I_{(2,2),(2,2)}
    \end{bmatrix}.
    \label{eq:single_mode_qfi_matrix}
\end{equation}}

With this convention established, we now state our main theorem of this section:

\begin{theorem}
\label{thm:main}The Fisher--Bures information matrix elements of the
parameterized family $\left(  \rho(\mu,H)\right)  _{\mu,H}$, with $\rho
(\mu,H)$ defined as in~\eqref{eq:bosonic-Gaussian-thermal-state}, are as
follows:
\begin{align}
I_{m_{1},m_{2}}^{\operatorname{FB}}(\mu,H) &  =\int_{\mathbb{R}
}dt\ q(t)\left[  HVS(t)H\right]  _{m_{1},m_{2}},\label{eq:cross-term-mean}
\\
I_{m,\left(
k,\ell\right)  }^{\operatorname{FB}}(\mu,H)  &  =I_{\left(  k,\ell\right)  ,m}^{\operatorname{FB}}(\mu,H) =0,\label{eq:zero-cross-terms-nice}
\\
I_{\left(  k_{1},\ell_{1}\right)  ,\left(  k_{2},\ell_{2}\right)
}^{\operatorname{FB}}(\mu,H) &  =\frac{1}{4}\int_{\mathbb{R}}dt\ q(t)W_{k_{1}
,\ell_{1},k_{2},\ell_{2}}(t)\nonumber\\
&  \qquad-\frac{1}{4}V_{k_{1},\ell_{1}}V_{k_{2},\ell_{2}}
,\label{eq:cross-term-Ham-mat}
\end{align}
where $q(t)$ is defined in~\eqref{eq:convolve-high-peak-tent},
\begin{multline}
W_{k_{1},\ell_{1},k_{2},\ell_{2}}(t)\coloneqq V_{k_{1},\ell_{1}}\left[
S(t)^{T}VS(t)\right]  _{k_{2},\ell_{2}}\\
+\left[  VS(t)\right]  _{k_{1},k_{2}}\left[  VS(t)\right]  _{\ell_{1},\ell
_{2}}+\left[  VS(t)\right]  _{k_{1},\ell_{2}}\left[  VS(t)\right]  _{\ell
_{1},k_{2}}\\
-\frac{1}{4}\left[  \Omega S(t)\right]  _{k_{1},k_{2}}\left[  \Omega
S(t)\right]  _{\ell_{1},\ell_{2}}-\frac{1}{4}\left[  \Omega S(t)\right]
_{k_{1},\ell_{2}}\left[  \Omega S(t)\right]  _{\ell_{1},k_{2}}
\end{multline}
and
\begin{equation}
S(t)\coloneqq e^{-H\Omega t}.
\end{equation}
Furthermore, the Kubo--Mori information matrix elements
\begin{align}
& I_{m_{1}
,m_{2}}^{\operatorname{KM}}(\mu,H),\\
& I_{m,\left(  k,\ell\right)  }^{\operatorname{KM}}(\mu,H),\\
& I_{\left(  k,\ell\right)  ,m}^{\operatorname{KM}}(\mu,H), \\ & 
I_{\left(  k_{1}
,\ell_{1}\right)  ,\left(  k_{2},\ell_{2}\right)  }^{\operatorname{KM}}
(\mu,H)
\end{align}
are precisely the same as above, but with
the substitution $q(t)\rightarrow p(t)$, where $p(t)$ is defined in~\eqref{eq:high-peak-tent}. Finally, the $\alpha$-$z$ information matrix elements
\begin{align}
& I_{m_{1}
,m_{2}}^{\alpha,z}(\mu,H),\\
& I_{m,\left(  k,\ell\right)  }^{\alpha,z}(\mu,H),\\
& I_{\left(  k,\ell\right)  ,m}^{\alpha,z}(\mu,H), \\ & 
I_{\left(  k_{1}
,\ell_{1}\right)  ,\left(  k_{2},\ell_{2}\right)  }^{\alpha,z}
(\mu,H)
\end{align}
are precisely the same as those for the Fisher--Bures, but with the substitution $q(t)\rightarrow q_{\alpha,z}(t)$, where $q_{\alpha,z}(t)$ is defined in \eqref{eq:high-peak-tent-a-z}.
\end{theorem}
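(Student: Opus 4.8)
The plan is to specialize the general thermal-state formulas of Propositions~\ref{prop:FB-KM-info-mats} and~\ref{prop:FB-info-mats-alt} to the Gaussian family fixed by the identifications~\eqref{eq:Gauss-sub-1}--\eqref{eq:Gauss-sub-2}. For the Fisher--Bures matrix I would work from the $\Psi_\theta$-form~\eqref{eq:FB-info-mat-gen-thermal-alt}, and for the Kubo--Mori matrix from the $\Phi_\theta$-form~\eqref{eq:KM-info-mat-gen-thermal}; the point is that these two expressions share an identical algebraic template in which a single channel ($\Psi_\theta$ for Fisher--Bures, $\Phi_\theta$ for Kubo--Mori) acts on one of the two derivative operators. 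This is exactly what will make the Kubo--Mori statement follow from the Fisher--Bures one by the replacement $q(t)\to p(t)$, since $\Psi_\theta$ carries the weight $q(t)$ from~\eqref{eq:convolve-high-peak-tent} and $\Phi_\theta$ the weight $p(t)$ from~\eqref{eq:high-peak-tent}. The inputs are the derivatives~\eqref{eq:derivs-Gaussian-Ham}--\eqref{eq:derivs-Gaussian-Ham-2}: $\partial_{\mu_m}G$ is \emph{linear} in the centered quadratures $\hat{x}^{c}$, whereas $\partial_{h_{k,\ell}}G=\tfrac14\{\hat{x}^{c}_k,\hat{x}^{c}_\ell\}$ is \emph{quadratic}. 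The matrix therefore decomposes into a linear--linear (mean--mean) block, two linear--quadratic (mean--Hamiltonian) blocks, and a quadratic--quadratic (Hamiltonian--Hamiltonian) block.

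The key preliminary computation I would carry out is the action of the channels on the quadratures. Since $G$ is quadratic, the Heisenberg map $\hat{x}^{c}_k\mapsto e^{iGt}\hat{x}^{c}_k e^{-iGt}$ is linear: solving $\tfrac{d}{dt}\,e^{iGt}\hat{x}^{c}_k e^{-iGt}=i[G,\cdot\,]$ with the canonical relations~\eqref{eq:symplectic-form} gives $e^{iGt}\hat{x}^{c}_k e^{-iGt}=\sum_a [S(t)]_{ak}\,\hat{x}^{c}_a$ with $S(t)=e^{-H\Omega t}$. Because conjugation is multiplicative, the same $S(t)$ governs products of quadratures, so $\Phi_\theta$ and $\Psi_\theta$ act on linear and quadratic operators merely by inserting the averaged matrices $\int dt\,p(t)\,S(t)$ or $\int dt\,q(t)\,S(t)$. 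Two further facts streamline everything: first, $\langle\hat{x}^{c}_j\rangle_\rho=0$ by the definition of the mean vector; second, the covariance matrix is stationary under the induced flow, $S(t)^T V S(t)=V$, which holds because $\rho(\mu,H)$ commutes with $G$ and hence is invariant under $e^{\pm iGt}$. This stationarity identity is what lets the first term of $W$ appear equivalently as $V_{k_1,\ell_1}[S(t)^TVS(t)]_{k_2,\ell_2}$.

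With these in place the three blocks are handled in turn. \textbf{Mean--mean:} both factors are linear, so after applying the channel the expectation reduces to the second moment $\langle\{\hat{x}^{c}_i,\hat{x}^{c}_a\}\rangle_\rho=2V_{i,a}$ from~\eqref{eq:covariance-matrices}, which, together with the two factors of $H$ coming from~\eqref{eq:derivs-Gaussian-Ham-2} and the inserted $\int dt\,q(t)S(t)$, assembles into the matrix expression $[HVS(t)H]_{m_1,m_2}$ integrated against $q(t)$ (with the overall numerical constant fixed by these second-moment factors). \textbf{Mean--Hamiltonian:} here one factor is linear (odd) and the other quadratic (even); the subtracted term vanishes because $\langle\partial_{\mu_m}G\rangle_\rho=0$, and the anticommutator term is a sum of monomials of odd total degree in $\hat{x}^{c}$, whose expectation vanishes since all odd moments of a zero-mean Gaussian state are zero. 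Hence both cross blocks are identically $0$. \textbf{Hamiltonian--Hamiltonian:} both factors are quadratic, so I must evaluate a fully symmetrized four-point function of centered quadratures. I would do this by Wick's theorem, writing the elementary two-point function as $\langle\hat{x}^{c}_i\hat{x}^{c}_j\rangle_\rho=V_{i,j}+\tfrac{i}{2}\Omega_{i,j}$. Under the nested anticommutators the mixed covariance--symplectic pairings cancel, leaving (i) the three products of covariances, which become $V_{k_1,\ell_1}[S^TVS]_{k_2,\ell_2}$ and the two $[VS][VS]$ terms, and (ii) the products of two symplectic forms, which carry the factor $(\tfrac{i}{2})^2=-\tfrac14$ and become the two $-\tfrac14[\Omega S][\Omega S]$ terms; together these constitute $W_{k_1,\ell_1,k_2,\ell_2}(t)$. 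Finally, subtracting $\langle\partial_{h_{k_1,\ell_1}}G\rangle_\rho\langle\partial_{h_{k_2,\ell_2}}G\rangle_\rho=\tfrac14 V_{k_1,\ell_1}V_{k_2,\ell_2}$ produces the stated formula.

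The main obstacle is this last four-point evaluation: correctly expanding the doubly nested anticommutators, applying Wick's theorem with attention to operator ordering, and separating the two-point function into its symmetric ($V$) and antisymmetric ($\Omega$) parts so that exactly the six terms of $W$ emerge with the correct signs and the $-\tfrac14$ prefactors on the $\Omega$-pairings. Everything else reduces to the linear Heisenberg-evolution lemma and the stationarity identity $S(t)^TVS(t)=V$. The Kubo--Mori claim then requires no additional work: its defining expression~\eqref{eq:KM-info-mat-gen-thermal} is obtained from the Fisher--Bures $\Psi_\theta$-form~\eqref{eq:FB-info-mat-gen-thermal-alt} by replacing the channel $\Psi_\theta$ (weight $q(t)$) with $\Phi_\theta$ (weight $p(t)$), so the identical index algebra goes through verbatim under $q(t)\to p(t)$.
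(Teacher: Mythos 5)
Your proposal is correct and follows essentially the same route as the paper's proof in Appendix~\ref{app:main-thm-proof}: specializing Propositions~\ref{prop:FB-KM-info-mats} and~\ref{prop:FB-info-mats-alt} via~\eqref{eq:derivs-Gaussian-Ham}--\eqref{eq:derivs-Gaussian-Ham-2}, using the linear Heisenberg evolution $e^{i\hat{H}t}\hat{x}^{c}_k e^{-i\hat{H}t}=\sum_\ell S_{k,\ell}(t)\hat{x}^{c}_\ell$ (Lemma~\ref{lem:BCH-etc}), killing the cross blocks by vanishing odd moments, and reducing the remaining blocks to second and symmetrized fourth moments, with the Kubo--Mori case following by $q(t)\to p(t)$. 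The only cosmetic difference is that you derive the symmetrized four-point function by Wick's theorem with $\langle\hat{x}^{c}_i\hat{x}^{c}_j\rangle=V_{i,j}+\tfrac{i}{2}\Omega_{i,j}$, whereas the paper cites the equivalent closed-form moment identity from the literature.
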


\begin{proof}
See Appendix~\ref{app:main-thm-proof}.
\end{proof}

\section{Derivative of a bosonic Gaussian thermal state}

\label{sec:derivative-bosonic-G}

Prior work presented a derivative of a bosonic Gaussian state with respect its mean and covariance matrix \cite[Section~B.3]{bittel2024optimalestimatestracedistance}. In this section, we provide an alternative notion of a derivative of a bosonic
Gaussian state, by making direct use of Proposition~\ref{prop:deriv}\ and~\eqref{eq:derivs-Gaussian-Ham}--\eqref{eq:derivs-Gaussian-Ham-2}.

\begin{theorem}
\label{thm:derivative-bosonic-Gaussian}
For a bosonic Gaussian thermal
state $\rho(\mu,H)$, defined as in~\eqref{eq:bosonic-Gaussian-thermal-state},
the following identities hold:
\begin{multline}
\frac{\partial}{\partial h_{k,\ell}}\rho(\mu,H)=\label{eq:deriv-wrt-Ham}\\
-\frac{1}{8}\int_{\mathbb{R}}dt\ p(t)\sum_{k^{\prime},\ell^{\prime}
}S_{k,k^{\prime}}(t)S_{\ell,\ell^{\prime}}(t)\left\{  \left\{  \hat
{x}_{k^{\prime}}^{c},\hat{x}_{\ell^{\prime}}^{c}\right\}  ,\rho(\mu
,H)\right\}  \\
+\frac{V_{k,\ell}}{2}\, \rho(\mu,H)\ ,
\end{multline}
\begin{multline}
\frac{\partial}{\partial\mu_{m}}\rho(\mu,H)=\label{eq:deriv-wrt-mean}\\
\frac{1}{2}\int_{\mathbb{R}}dt\ p(t)\sum_{j,j^{\prime}}h_{m,j}S_{j,j^{\prime}
}(t)\left\{  \hat{x}_{j^{\prime}}^{c},\rho(\mu,H)\right\}  ,
\end{multline}
where $p(t)$ is defined in~\eqref{eq:high-peak-tent}, $\hat{x}_{k}^{c}$ in
\eqref{eq:centered-quad-op}, and
\begin{equation}
S(t)\coloneqq e^{\Omega Ht}.
\end{equation}

\end{theorem}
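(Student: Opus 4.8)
The plan is to apply Proposition~\ref{prop:deriv} directly, using the identification \eqref{eq:Gauss-sub-1}--\eqref{eq:Gauss-sub-2} of a bosonic Gaussian thermal state as a thermal state with $G(\theta)=\frac{1}{2}(\hat{x}^{c})^{T}H\hat{x}^{c}$, together with the partial derivatives already recorded in \eqref{eq:derivs-Gaussian-Ham}--\eqref{eq:derivs-Gaussian-Ham-2}. The only genuinely new ingredient is an explicit evaluation of the channel $\Phi_{\theta}$ defined in \eqref{eq:Phi-channel} acting on the quadratures, which reduces everything to computing the Heisenberg evolution $e^{iGt}\hat{x}_{j}^{c}e^{-iGt}$.

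First I would establish that conjugation by $e^{iGt}$ acts linearly on the centered quadratures, namely $e^{iGt}\hat{x}_{k}^{c}e^{-iGt}=\sum_{k'}[S(t)]_{k,k'}\hat{x}_{k'}^{c}$ with $S(t)=e^{\Omega Ht}$. To see this, set $X_{k}(t)\coloneqq e^{iGt}\hat{x}_{k}^{c}e^{-iGt}$, so that $\frac{d}{dt}X_{k}(t)=i[G,X_{k}(t)]$. A direct computation of the commutator, using the canonical commutation relations $[\hat{x}_{a}^{c},\hat{x}_{k}^{c}]=i\Omega_{a,k}$ and crucially the symmetry $H=H^{T}$ to combine the two resulting terms, gives $[G,\hat{x}_{k}^{c}]=i\sum_{a}(H\Omega)_{a,k}\hat{x}_{a}^{c}$. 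Writing $X_{k}(t)=\sum_{k'}M_{k,k'}(t)\hat{x}_{k'}^{c}$ then turns the Heisenberg equation into the linear matrix ODE $\dot{M}=M\Omega H$ with $M(0)=I$, where the identity $(H\Omega)^{T}=-\Omega H$ is what converts $H\Omega$ into $\Omega H$ and fixes the sign. Its solution is $M(t)=e^{\Omega Ht}=S(t)$.

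With this in hand, the two derivative formulas follow by substitution. For the mean case, \eqref{eq:derivs-Gaussian-Ham-2} gives $\partial_{\mu_{m}}G=-\sum_{j}h_{m,j}\hat{x}_{j}^{c}$, so applying $\Phi_{\theta}$ and the linearity of the evolution yields $\Phi_{\theta}(\partial_{\mu_{m}}G)=-\int_{\mathbb{R}}dt\,p(t)\sum_{j,j'}h_{m,j}S_{j,j'}(t)\hat{x}_{j'}^{c}$; since $\langle\hat{x}_{j}^{c}\rangle_{\rho}=0$, the correction term $\rho(\mu,H)\langle\partial_{\mu_m}G\rangle_{\rho}$ in \eqref{eq:thermal-state-deriv} vanishes, and \eqref{eq:deriv-wrt-mean} drops out. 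For the Hamiltonian case, \eqref{eq:derivs-Gaussian-Ham} gives $\partial_{h_{k,\ell}}G=\frac{1}{4}\{\hat{x}_{k}^{c},\hat{x}_{\ell}^{c}\}$; because conjugation by $e^{iGt}$ is an algebra homomorphism, it passes through the anticommutator, producing $\Phi_{\theta}(\partial_{h_{k,\ell}}G)=\frac{1}{4}\int_{\mathbb{R}}dt\,p(t)\sum_{k',\ell'}S_{k,k'}(t)S_{\ell,\ell'}(t)\{\hat{x}_{k'}^{c},\hat{x}_{\ell'}^{c}\}$. The correction term now equals $\frac{V_{k,\ell}}{2}\rho(\mu,H)$ via the covariance-matrix definition \eqref{eq:covariance-matrices}, which states exactly that $\langle\{\hat{x}_{k}^{c},\hat{x}_{\ell}^{c}\}\rangle_{\rho}=2V_{k,\ell}$. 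Substituting both expressions into \eqref{eq:thermal-state-deriv} then gives \eqref{eq:deriv-wrt-Ham}.

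The main obstacle is the linear-evolution step of the second paragraph: obtaining the matrix $S(t)=e^{\Omega Ht}$ with the correct ordering of $\Omega$ and $H$ and the correct sign requires carefully tracking the transpose $(H\Omega)^{T}=-\Omega H$ and invoking the symmetry of $H$ to symmetrize the commutator. Once this evolution is pinned down, the remainder is routine bookkeeping against Proposition~\ref{prop:deriv}.
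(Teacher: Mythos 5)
Your proposal is correct and follows essentially the same route as the paper: apply Proposition~\ref{prop:deriv} with the identifications \eqref{eq:Gauss-sub-1}--\eqref{eq:Gauss-sub-2}, reduce everything to the conjugation identity $e^{i\hat{H}t}\hat{x}_{k}^{c}e^{-i\hat{H}t}=\sum_{k'}S_{k,k'}(t)\hat{x}_{k'}^{c}$ (the paper's Lemma~\ref{lem:BCH-etc}), and finish with $\langle\hat{x}_{j}^{c}\rangle_{\rho}=0$ and $\langle\{\hat{x}_{k}^{c},\hat{x}_{\ell}^{c}\}\rangle_{\rho}=2V_{k,\ell}$. The only (immaterial) difference is that you establish the conjugation identity by integrating the Heisenberg equation of motion, whereas the paper expands via Hadamard's lemma; your commutator $[G,\hat{x}_{k}^{c}]=i\sum_{a}(H\Omega)_{a,k}\hat{x}_{a}^{c}$ and the resulting ODE $\dot{M}=M\Omega H$ check out against the paper's $[i\hat{H}t,\hat{x}_{k}^{c}]=[(\Omega Ht)\hat{x}^{c}]_{k}$.
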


\begin{proof}
See Appendix~\ref{app:deriv-bos-G-therm}.
\end{proof}

\medskip 

We remark here briefly that one application of this finding is in developing gradient descent algorithms for optimization when using bosonic Gaussian thermal states as an ansatz. In particular, if the objective function is linear in $\rho(\mu,H)$, as is the case for $\operatorname{Tr}[G \rho(\mu,H)]$, where $G$ is a Hamiltonian, then the expressions in \eqref{eq:deriv-wrt-Ham} and \eqref{eq:deriv-wrt-mean} are relevant for developing a gradient descent algorithm that minimizes this objective function. We leave the full exploration of this direction for future work. 

\section{Symmetric logarithmic derivative of a bosonic Gaussian thermal state}

\label{sec:SLD-bosonic-G}

In this section, we provide an explicit form for the symmetric logarithmic derivative of a bosonic Gaussian thermal state, by employing the expressions in Proposition~\ref{prop:main-sld} and~\eqref{eq:derivs-Gaussian-Ham}--\eqref{eq:derivs-Gaussian-Ham-2}. As mentioned in the introduction, it is well known that this observable is optimal for single-parameter estimation~\cite{braunsteincaves1994} (see also \cite[Eq.~(108)]{Sidhu2020}), and so our result here could be useful for this purpose. However, more work is needed to determine how to implement a measurement of this observable.

\begin{theorem}
\label{thm:SLD-bosonic-G}
    For a bosonic Gaussian thermal
state $\rho(\mu,H)$, defined as in~\eqref{eq:bosonic-Gaussian-thermal-state}, the symmetric logarithmic derivatives  are as follows:
    \begin{multline}
        L^{(h_{k,\ell})}(\mu,H)  =  \\
         -\frac{1}{4}\int_{\mathbb{R}}dt\ p(t)\sum_{k^{\prime},\ell^{\prime}
}S_{k,k^{\prime}}(t)S_{\ell,\ell^{\prime}}(t)  \left\{  \hat
{x}_{k^{\prime}}^{c},\hat{x}_{\ell^{\prime}}^{c}\right\} + \frac{V_{k,\ell}}{2} \, I ,
\label{eq:deriv-Ham-mat}
    \end{multline}
    \begin{equation}
        L^{(\mu_{m})}(\mu,H)  =    
    \int_{\mathbb{R}}dt\ p(t)\sum_{j,j^{\prime}}h_{m,j}
S_{j,j^{\prime}}(t)  \hat{x}_{j^{\prime}}^{c},
\label{eq:deriv-mean-vec}
    \end{equation}
    where $p(t)$, $\hat{x}_{k}^{c}$, and $S(t)$ are defined in Theorem~\ref{thm:derivative-bosonic-Gaussian}.
\end{theorem}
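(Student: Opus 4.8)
The plan is to specialize the general SLD formula of Proposition~\ref{prop:main-sld} to the Gaussian case by inserting the two partial derivatives of $G$ recorded in \eqref{eq:derivs-Gaussian-Ham}--\eqref{eq:derivs-Gaussian-Ham-2}. Writing out \eqref{eq:SLD-op} for $\theta_j = \mu_m$ and $\theta_j = h_{k,\ell}$, the SLD reduces to applying the channel $\Phi_\theta$ of \eqref{eq:Phi-channel} to $\partial_{\mu_m} G = -\sum_j h_{m,j}\hat{x}^c_j$ and to $\partial_{h_{k,\ell}} G = \tfrac14 \{\hat{x}^c_k, \hat{x}^c_\ell\}$, respectively, together with evaluating the scalar expectation terms. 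Because $\Phi_\theta$ acts by averaging the Heisenberg conjugations $e^{iG t}(\cdot) e^{-iGt}$ against the density $p(t)$ of \eqref{eq:high-peak-tent}, the entire computation hinges on understanding how the centered quadratures $\hat{x}^c_j$ evolve under conjugation by $e^{iGt}$.

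First I would establish the central lemma: since $G = \tfrac12 (\hat{x}^c)^T H \hat{x}^c$ is quadratic, the Heisenberg-evolved operator $\hat{x}^c_j(t) \coloneqq e^{iGt}\hat{x}^c_j e^{-iGt}$ obeys a closed linear system. Differentiating and using the canonical commutation relations \eqref{eq:symplectic-form} to evaluate $i[G, \hat{x}^c_k]$, and then exploiting the symmetry $H = H^T$ together with the antisymmetry $\Omega^T = -\Omega$, I expect the commutator to collapse to a single linear combination of the $\hat{x}^c_i$, yielding $\tfrac{d}{dt}\hat{x}^c(t) = \Omega H\, \hat{x}^c(t)$ and hence $e^{iGt}\hat{x}^c_j e^{-iGt} = \sum_{j'} S_{j,j'}(t)\,\hat{x}^c_{j'}$ with $S(t) = e^{\Omega H t}$, exactly the symplectic matrix appearing in Theorem~\ref{thm:derivative-bosonic-Gaussian}. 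If this evolution has already been derived en route to Theorem~\ref{thm:derivative-bosonic-Gaussian}, I would simply quote it here.

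With the evolution in hand, the mean-vector case is immediate: by linearity $\Phi_\theta(\hat{x}^c_j) = \int_{\mathbb{R}} dt\, p(t)\sum_{j'} S_{j,j'}(t)\hat{x}^c_{j'}$, so $-\Phi_\theta(\partial_{\mu_m}G) = \int_{\mathbb{R}} dt\, p(t)\sum_{j,j'} h_{m,j} S_{j,j'}(t)\hat{x}^c_{j'}$, and the additive scalar term vanishes because $\langle \hat{x}^c_j\rangle_{\rho} = 0$ forces $\langle \partial_{\mu_m} G\rangle_{\rho}=0$; this reproduces \eqref{eq:deriv-mean-vec}. For the Hamiltonian case, I would push $\Phi_\theta$ through the anticommutator, using that conjugation by a unitary distributes over products (hence over $\{\cdot,\cdot\}$), to obtain $\Phi_\theta(\{\hat{x}^c_k,\hat{x}^c_\ell\}) = \int_{\mathbb{R}} dt\, p(t)\sum_{k',\ell'} S_{k,k'}(t)S_{\ell,\ell'}(t)\{\hat{x}^c_{k'},\hat{x}^c_{\ell'}\}$. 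The scalar term here is $\langle \partial_{h_{k,\ell}}G\rangle_\rho = \tfrac14\langle\{\hat{x}^c_k,\hat{x}^c_\ell\}\rangle_\rho = \tfrac12 V_{k,\ell}$ by the definition \eqref{eq:covariance-matrices} of the covariance matrix; assembling the two contributions yields \eqref{eq:deriv-Ham-mat}.

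The main obstacle is the Heisenberg-evolution lemma, namely pinning down the exact form $S(t)=e^{\Omega H t}$ of the symplectic transformation. The index bookkeeping, the symmetry of $H$, and the sign conventions of $\Omega$ must be handled with care, since an error there propagates directly into both final formulas; everything after that step is routine linearity of $\Phi_\theta$ and the direct evaluation of two expectation values.
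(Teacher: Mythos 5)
Your proposal is correct and follows essentially the same route as the paper: the authors omit the proof of Theorem~\ref{thm:SLD-bosonic-G} precisely because it is obtained by plugging \eqref{eq:derivs-Gaussian-Ham}--\eqref{eq:derivs-Gaussian-Ham-2} into Proposition~\ref{prop:main-sld} and invoking Lemma~\ref{lem:BCH-etc} (the Heisenberg-evolution identity $e^{i\hat{H}t}\hat{x}^c_k e^{-i\hat{H}t}=\sum_{\ell}S_{k,\ell}(t)\hat{x}^c_{\ell}$, which the paper proves via Hadamard's lemma rather than your differential-equation sketch, though you correctly note you would simply quote it). Your evaluation of the scalar terms ($\langle\hat{x}^c_j\rangle_{\rho}=0$ and $\tfrac14\langle\{\hat{x}^c_k,\hat{x}^c_\ell\}\rangle_{\rho}=\tfrac12 V_{k,\ell}$) matches the paper's treatment in Appendix~\ref{app:deriv-bos-G-therm}.
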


\begin{proof}
    The proof is very similar to the proof of Theorem~\ref{thm:derivative-bosonic-Gaussian} (see Appendix~\ref{app:deriv-bos-G-therm}), and so we omit it. 
\end{proof}

\medskip 

Observe that the equality in~\eqref{eq:SLD-diff-eq} holds for~\eqref{eq:deriv-wrt-Ham} and~\eqref{eq:deriv-Ham-mat}, as well as for~\eqref{eq:deriv-wrt-mean} and~\eqref{eq:deriv-mean-vec}.

\section{Example: Squeezed thermal state}

\label{sec:example}

In this section, we examine the Fisher--Bures information matrix elements of a squeezed thermal state, which is a general form for a single-mode, zero-mean bosonic Gaussian state.  A thermal state with mean photon number $\epsilon\geq 0$ has a diagonal covariance matrix of the following form:
\begin{align}
D =
\frac{1}{2} \left(
\begin{array}{cc}
 1 +\epsilon & 0 \\
 0 & 1 +\epsilon \\
\end{array}
\right).
\end{align}
For single-mode squeezing along the momentum quadrature $p$ followed by rotation by an angle $\theta$, the corresponding symplectic matrix $S$ is as follows:
\begin{align}
S &=  \left(
\begin{array}{cc}
 \cos (\theta ) & \sin (\theta ) \\
 -\sin (\theta ) & \cos (\theta ) \\
\end{array}
\right) \left(
\begin{array}{cc}
e^r & 0 \\
 0 & e^{-r} \\
\end{array}
\right).
\end{align}

We can compute the matrix $H$ from the formula \cite{WTLB17}
\begin{equation}
    H =  - 2  \Omega S [\operatorname{arcoth}(2 D)]^{\oplus 2} S^T \Omega
\end{equation}
as follows:
\begin{align}
&H =  \log \! \left(\frac{\epsilon +2}{\epsilon }\right) \times \nonumber \\
&\left[
\begin{array}{cc}
   e^{2 r} \sin ^2(\theta )+ e^{-2 r} \cos ^2(\theta ) &   \sin (2 \theta ) \sinh (2 r) \\
   \sin (2 \theta ) \sinh (2 r) &   e^{-2 r} \sin ^2(\theta )+  e^{2 r} \cos ^2(\theta ) \\
\end{array}
\right].
\end{align}

For a single-mode state, the $H$ matrix has elements
\begin{align}
H = \begin{bmatrix}
h_{xx} & h_{xy} \\
h_{xy} & h_{yy} 
\end{bmatrix},
\end{align}
the parameter vector $\theta$ is given by $\theta = (\mu_1, \mu_2, h_{1,1}, h_{2,1}, h_{1,2}, h_{2,2})$ and the Fisher--Bures information matrix elements by
{\small
\begin{equation} I^{\operatorname{FB}}=
    \begin{bmatrix}
        I_{1,1} & I_{1,2} & 0 & 0 & 0 & 0\\
        I_{2,1} & I_{2,2} & 0 & 0 & 0 & 0\\
        0 & 0 & I_{(1,1),(1,1)} & I_{(1,1),(1,2)} & I_{(1,1),(2,1)} & I_{(1,1),(2,2)}\\
        0 & 0 & I_{(1,2),(1,1)} & I_{(1,2),(1,2)} & I_{(1,2),(2,1)} & I_{(1,2),(2,2)}\\
        0 & 0 & I_{(2,1),(1,1)} & I_{(2,1),(1,2)} & I_{(2,1),(2,1)} & I_{(2,1),(2,2)}\\
        0 & 0 & I_{(2,2),(1,1)} & I_{(2,2),(1,2)} & I_{(2,2),(2,1)} & I_{(2,2),(2,2)}
    \end{bmatrix}.
\end{equation}}

In Fig.~\ref{fig:single_mode_sq}, we plot select elements of the Fisher--Bures information matrix $I^{\operatorname{FB}}$, for the single-mode squeezed thermal state, evaluated at $\theta =0$ and at $\theta = \pi/4$. The python code needed to reproduce this figure is available along with the arXiv posting of our paper. The diagonal entries $I_{(1,1),(1,1)}$ and $I_{(2,2),(2,2)}$ are associated with the Hamiltonian parameters $h_{xx}$ and $h_{yy}$, respectively. In the upper panel, plotted on a linear–log scale versus the squeezing parameter $r$, these two matrix elements appear as straight lines. This behavior is expected: when the displacements $\mu_1,\mu_2 = 0$, the associated matrix elements are proportional to the state’s energy, and that energy in turn is proportional to the corresponding entries of the Hamiltonian matrix $H$. For example, at $r=-1$ the state is strongly squeezed along $\hat x$, so that $h_{xx} \propto e^{-2}$, is being suppressed, and the associated Fisher--Bures matrix element is correspondingly small. More generally, the matrix elements, for a given quadrature, grow linearly with the energy stored in that quadrature, while the mean energy of the state itself grows exponentially with~$r$.

The other elements correspond to the off-diagonal terms and are products of the quadrature operators; hence, they are constant. In the bottom plot, the state is squeezed at $45^\circ$ to the $x$ axis, and we see a minimum in the Fisher--Bures matrix element in the parameters corresponding to the minimum energy state, $r=0$.

\begin{figure}[h!]
\begin{center}
\includegraphics[width=0.5\textwidth]{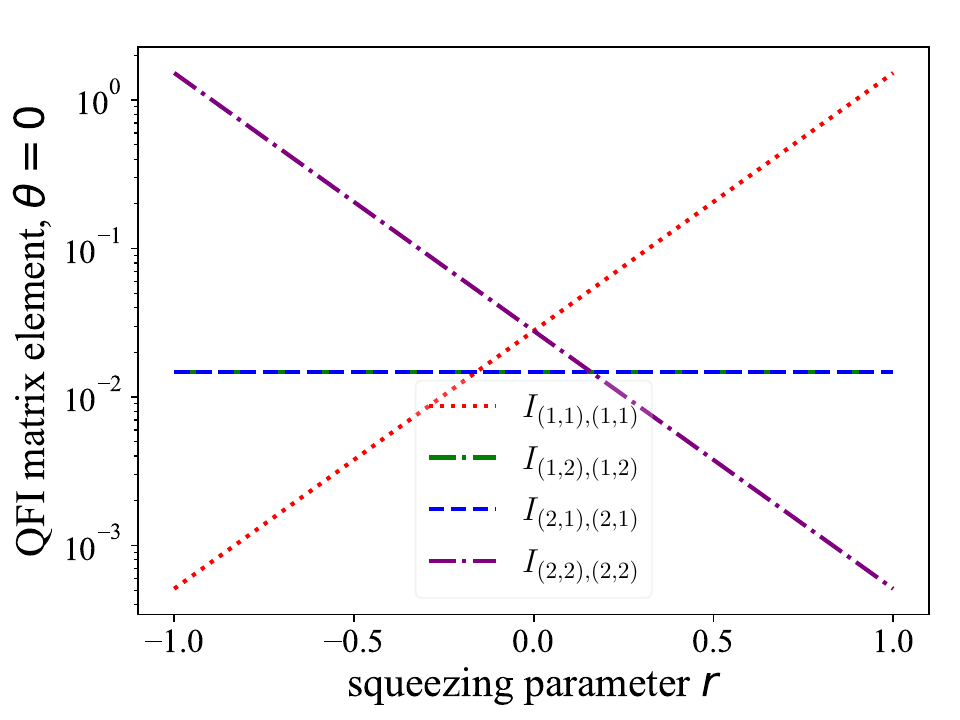}
\includegraphics[width=0.5\textwidth]{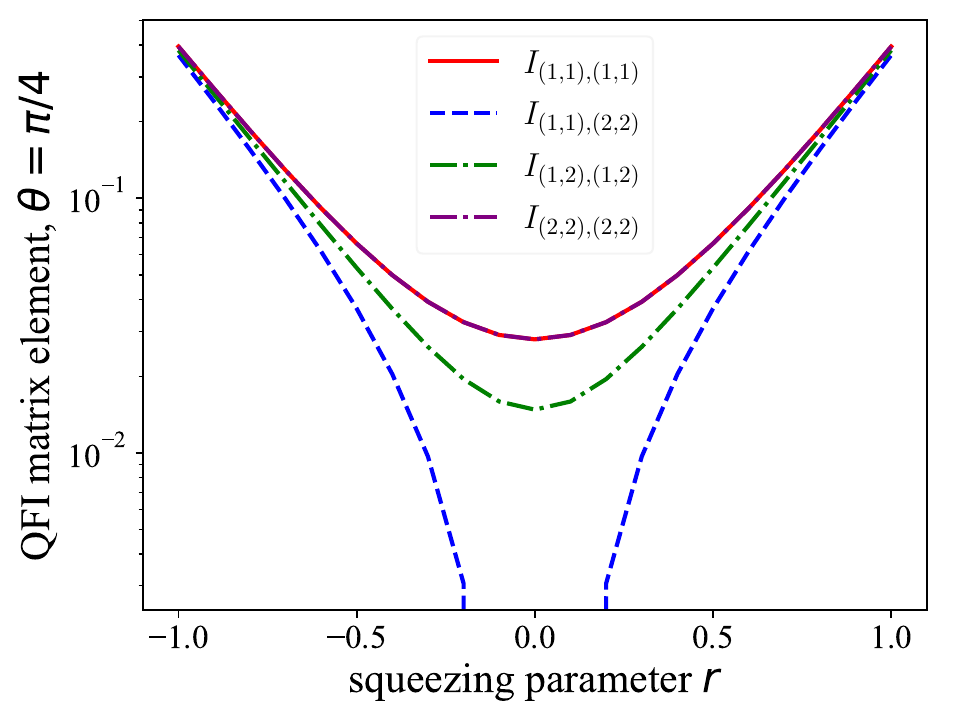}
\caption{Select Fisher--Bures information matrix elements as a function of
the squeezing parameter $r\in [-1,1]$; here, $\epsilon = 0.1$, $\theta = 0$ (top), and $\theta = \pi/4$ (bottom).}
\label{fig:single_mode_sq}
\end{center}
\end{figure}

A physical example is provided by a single-mode optomechanical sensor with Hamiltonian contributions 
$D_1(t) (\hat b + \hat b^\dag)$ and $D_2(\hat b + \hat b^\dag)^2$ \cite{PhysRevA.101.033834}. The first term describes a signal-induced displacement of the mechanical mode, such as that produced by an external force (e.g. gravitational acceleration), while the second describes signal-induced single-mode squeezing, for example through parametric modulation of the mechanical frequency. In this case, our results provide the corresponding information-geometric quantities for estimating the displacement and the quadratic Hamiltonian.

\section{Conclusion}

\label{sec:conclusion}

In conclusion, the main contributions of our paper are formulas for the Fisher--Bures, Kubo--Mori, and $\alpha$-$z$ information matrices of bosonic Gaussian thermal states, as well as formulas for their derivatives and their symmetric logarithmic derivatives. An immediate application of the first contribution, by means of~\eqref{eq:Cramer--Rao-multiple} and~\eqref{eq:scalar-Cramer--Rao-from-matrix-one}, is to provide fundamental limitations on how well one can estimate the mean vector and Hamiltonian matrices of bosonic Gaussian thermal states. Our expressions for their symmetric logarithmic derivatives could be useful in designing optimal strategies for single parameter estimation. Additionally, this finding and the others listed above could be useful if one uses bosonic Gaussian thermal states as an ansatz for quantum machine learning; i.e., the derivative could be useful in developing gradient descent algorithms and the Fisher--Bures, Kubo--Mori, and $\alpha$-$z$ information matrices could be useful in developing natural gradient descent algorithms, similar to the settings considered in~\cite{patel2024quantumboltzmann,patel2024naturalgradient}.

Going forward from here, there is plenty to investigate. Could our findings be helpful in the learning problems recently considered in~\cite{bittel2024optimalestimatestracedistance,fanizza2024efficienthamiltonianstructuretrace}? These papers do not make use of quantum generalizations of Fisher information, but due to their fundamental role in estimation, it seems likely they could play a role in the learning context. It is also pertinent to design a quantum algorithm that can measure the symmetric logarithmic derivative (SLD) observable detailed in Theorem~\ref{thm:SLD-bosonic-G}. As mentioned previously, this will be useful
in single parameter estimation of bosonic Gaussian thermal states. For the finite-dimensional case, an algorithm for measuring the SLD was recently sketched in~\cite{patel2024naturalgradient}. We also wonder whether our notion of the derivative of a bosonic Gaussian thermal state could be useful in the same context in which the alternative notion of derivative from~\cite{bittel2024optimalestimatestracedistance} is. Finally, we mentioned that our formulas for the derivative, the Fisher--Bures, Kubo--Mori, and $\alpha$-$z$ information matrices should have applications in gradient descent and natural gradient descent algorithms when using bosonic Gaussian thermal states as an ansatz. However, we leave the full exploration of these directions to future work.

\begin{acknowledgments}

We thank Ryan Mann for helpful discussions.
ZH is supported by an ARC
DECRA Fellowship (DE230100144) ``Quantum-enabled super-resolution imaging''.
MMW acknowledges support from the National Science Foundation under Grant No.
2304816 and from Air Force Research Laboratory under agreement number FA8750-23-2-0031.

This material is based on research sponsored by Air Force Research Laboratory
under agreement number FA8750-23-2-0031. The U.S. Government is authorized to
reproduce and distribute reprints for Governmental purposes notwithstanding
any copyright notation thereon. The views and conclusions contained herein are
those of the authors and should not be interpreted as necessarily representing
the official policies or endorsements, either expressed or implied, of Air
Force Research Laboratory or the U.S. Government.
\end{acknowledgments}

\section*{Author contributions}

\noindent \textbf{Author Contributions}:
The following describes the different contributions of the authors of this work, using roles defined by the CRediT
(Contributor Roles Taxonomy) project~\cite{NISO}:

\medskip 
\noindent \textbf{ZH}: Formal Analysis, Investigation, Software, Validation, Writing - Original draft,  Writing - Review \& Editing.

\medskip 
\noindent \textbf{MMW}: Conceptualization, Formal Analysis,    Methodology, Validation, Writing - Original draft,  Writing - Review \& Editing.

\bibliography{Ref}

\onecolumngrid

\appendix

\section{QFI matrix parametrized by first and second moments}
\label{sec:comparison}

A useful point of comparison is earlier work deriving the Fisher--Bures, or equivalently quantum Fisher, information for bosonic Gaussian states in a parameterization by first and second moments. In particular, Ref.~\cite{monras2013phasespaceformalismquantum} derived a closed form for the single-parameter case, while Ref.~\cite{Gao2014} presented the corresponding multiparameter extension. Since the notation in these works differs from ours, we briefly summarize the final form in our conventions. This section is adapted from Ref.~\cite{Huang2024}.

Let $\vec{\theta}=(\theta_1,\ldots,\theta_m)$ denote a vector of parameters, let $\vec r$ be the mean vector, and let $\mat \sigma$ be the $2N\times 2N$ covariance matrix. We define
\begin{align}
\partial_j \coloneqq \frac{\partial}{\partial \theta_j}.
\end{align}
We also introduce the vectorized covariance matrix $\vec\varsigma$, obtained by stacking the columns of $\mat\sigma$ into a single $4N^2$-dimensional column vector:
\begin{align}
\vec\varsigma
\coloneqq
(
\sigma_{11},\dotsc,\sigma_{(2N)1},\;
\dotsc,\;
\sigma_{1(2N)},\dotsc,\sigma_{(2N)(2N)}
)^\tp .
\end{align}
Finally, we define the $4N^2\times 4N^2$ matrix
\begin{align}
\mat M
\coloneqq
\mat\sigma\otimes\mat\sigma-\mat\Omega\otimes\mat\Omega,
\end{align}
where $\otimes$ denotes the Kronecker product and $\mat\Omega$ is the symplectic form.

With these definitions, the Fisher--Bures information matrix elements may be written as
\begin{align}
\label{qfi_matrix}
J_{jk}
=
\frac12 (\partial_j \vec\varsigma)^\tp \mat M^{-1} (\partial_k \vec\varsigma)
+
2 (\partial_j \vec r)^\tp \mat\sigma^{-1} (\partial_k \vec r).
\end{align}
This formula makes transparent the two distinct contributions to the information geometry: the first term quantifies the sensitivity of the state through changes in the covariance matrix, i.e., through its squeezing and thermal-noise properties, while the second quantifies the contribution coming from changes in the mean vector, i.e., through displacement. Since $\mat M$ and $\mat\sigma$ are symmetric, the right-hand side is manifestly symmetric under $j\leftrightarrow k$, as required.

To connect with the literature, Ref.~\cite{monras2013phasespaceformalismquantum} considers only a single parameter, so that $\vec\theta$ has one component and $\partial_j$ reduces to an ordinary derivative. In that reference, our $\mat\sigma$, $\mat\Omega$, and $\vec r$ correspond respectively to $\Gamma$, $\omega$, and $d$, while the vectorized covariance matrix is denoted by $\lvert \Gamma)$ (or $(\Gamma\rvert$ in dual form). Ref.~\cite{Gao2014} gives the multiparameter version, but in a different operator basis, namely $(\hat a,\hat a^\dagger)$, in which the covariance matrix is complex symmetric rather than real symmetric, and differs from ours by a factor of two. More precisely, their covariance matrix $\Sigma$ is related to ours by
\begin{align}
\mat\sigma \mapsto 2H\Sigma H^\tp,
\qquad
H=\bigoplus_{k=1}^N \frac{1}{\sqrt2}
\begin{pmatrix}
1&1\\
-i&i
\end{pmatrix},
\end{align}
while the symplectic form transforms as $\mat\Omega \mapsto iH\Omega H^\tp=\Omega$. Consequently,
\begin{align}
\mat M
\mapsto
4(H\otimes H)\,\mathfrak M\,(H^\tp\otimes H^\tp),
\qquad
\mathfrak M=\Sigma\otimes\Sigma+\Omega\otimes\Omega/4,
\end{align}
and similarly $\vec r\mapsto H\lambda$ and $\vec\varsigma\mapsto 2(H\otimes H)\Sigma_{\mathrm{(vec)}}$. Equation~\eqref{qfi_matrix} is therefore just the moment-based Fisher--Bures information matrix expressed in our conventions.

\section{Proof of Proposition~\ref{prop:deriv}}

\label{app:deriv-proof}

Here we follow the proof of \cite[Lemma~10]{patel2024quantumboltzmann} quite closely, with the main change being that $G_j$ gets replaced by $\partial_j G(\theta)$. We provide a detailed proof for completeness.

According to Duhamel's formula, the partial derivative of a matrix exponential
$e^{A(x)}$ with respect to a parameter $x$ is given as follows:
\begin{equation}
\partial_{x}e^{A(x)}=\int_{0}^{1}e^{(1-u)A(x)}\left(  \partial_{x}A(x)\right)
e^{uA(x)}\ du.\label{eq:duh_for}
\end{equation}
Using this formula for $\partial_{j}e^{-G(\theta)}$, we obtain:
\begin{align}
\partial_{j}e^{-G(\theta)} &  =\int_{0}^{1}e^{(1-u)(-G(\theta))}\left(
\partial_{j}(-G(\theta)\right)  )e^{u(-G(\theta))}\ du\\
&  =-\int_{0}^{1}e^{(u-1)G(\theta)}\left(  \partial_{j}G(\theta)\right)
e^{-uG(\theta)}\ du.\label{eq:duh-simplify}
\end{align}
Now, suppose that the spectral decomposition of $G(\theta)$ is as follows:
\begin{equation}
G(\theta)=\sum_{k}\lambda_{k}|k\rangle\!\langle k|,\label{eq:G-spec-decomp}
\end{equation}
where $\{\lambda_{k}\}_{k}$ are the eigenvalues and $\{|k\rangle\}_{k}$ are
the corresponding eigenvectors. Substituting~\eqref{eq:G-spec-decomp}
into~\eqref{eq:duh-simplify}, we find that
\begin{align}
\partial_{j}e^{-G(\theta)} &  =-\int_{0}^{1}\left(  \sum_{k}e^{(u-1)\lambda
_{k}}|k\rangle\!\langle k|\right)  \left(  \partial_{j}G(\theta)\right)
\left(  \sum_{l}e^{-u\lambda_{l}}|l\rangle\!\langle l|\right)  du\\
&  =-\int_{0}^{1}\sum_{k,l}e^{(u-1)\lambda_{k}}|k\rangle\!\langle k|\left(
\partial_{j}G(\theta)\right)  e^{-u\lambda_{l}}|l\rangle\!\langle l|\ du\\
&  =-\sum_{k,l}|k\rangle\!\langle k|\left(  \partial_{j}G(\theta)\right)
|l\rangle\!\langle l|\left(  \int_{0}^{1}e^{(u-1)\lambda_{k}}e^{-u\lambda_{l}
}\ du\right)  \\
&  =-\sum_{k,l}|k\rangle\!\langle k|\left(  \partial_{j}G(\theta)\right)
|l\rangle\!\langle l|\left(  e^{-\lambda_{k}}\int_{0}^{1}e^{u\left(
\lambda_{k}-\lambda_{l}\right)  }\ du\right)  \\
&  =-\sum_{k,l}|k\rangle\!\langle k|\left(  \partial_{j}G(\theta)\right)
|l\rangle\!\langle l|\left(  e^{-\lambda_{k}}\frac{e^{\lambda_{k}-\lambda_{l}
}-1}{\lambda_{k}-\lambda_{l}}\right)  .\label{eq:pd-duh-sd}
\end{align}
Now, consider the following:
\begin{equation}
e^{-\lambda_{k}}\frac{e^{\lambda_{k}-\lambda_{l}}-1}{\lambda_{k}-\lambda_{l}
}=e^{-\lambda_{k}}\frac{e^{\lambda_{k}-\lambda_{l}}-1}{e^{\lambda_{k}
-\lambda_{l}}+1}\frac{e^{\lambda_{k}-\lambda_{l}}+1}{\lambda_{k}-\lambda_{l}
}=\frac{\tanh\!{\left(  \frac{\lambda_{k}-\lambda_{l}}{2}\right)  }}
{\frac{\lambda_{k}-\lambda_{l}}{2}}\frac{e^{-\lambda_{l}}+e^{-\lambda_{k}}}
{2}.\label{eq:tanh-intro}
\end{equation}
Let $p(t)$ be a function such that its Fourier transform is the following:
\begin{equation}
\int_{\mathbb{R}}dt\ p(t)e^{-i\omega t}=\frac{\tanh{\frac{\omega}{2}}}
{\frac{\omega}{2}}.\label{eq:pt-fourier}
\end{equation}
Using this equation and~\eqref{eq:tanh-intro}, we rewrite~\eqref{eq:pd-duh-sd}
in the following way:
\begin{align}
&  \partial_{j}e^{-G(\theta)}\nonumber\\
&  =-\sum_{k,l}|k\rangle\!\langle k|\left(  \partial_{j}G(\theta)\right)
|l\rangle\!\langle l|\left(  \left(  \int_{\mathbb{R}}dt\ p(t)e^{-i\left(
\lambda_{k}-\lambda_{l}\right)  t}\right)  \frac{e^{-\lambda_{l}}
+e^{-\lambda_{k}}}{2}\right)  \\
&  =-\frac{1}{2}\sum_{k,l}|k\rangle\!\langle k|\left(  \partial_{j}
G(\theta)\right)  |l\rangle\!\langle l|\left(  \int_{\mathbb{R}}
dt\ p(t)\left(  e^{-i\lambda_{k}t+i\lambda_{l}t-\lambda_{l}}+e^{-i\lambda
_{k}t-\lambda_{k}+i\lambda_{l}t}\right)  \right)  \\
&  =-\frac{1}{2}\left(  \int_{\mathbb{R}}dt\ p(t)\left(  \sum_{k,l}
|k\rangle\!\langle k|\partial_{j}G(\theta)|l\rangle\!\langle l|\ e^{-i\lambda
_{k}t+i\lambda_{l}t-\lambda_{l}}+\sum_{k,l}|k\rangle\!\langle k|\left(
\partial_{j}G(\theta)\right)  |l\rangle\!\langle l|\ e^{-i\lambda_{k}
t-\lambda_{k}+i\lambda_{l}t}\right)  \right)  \\
&  =-\frac{1}{2}\Bigg(\int_{\mathbb{R}}dt\ p(t)\Bigg(\sum_{k}e^{-i\lambda
_{k}t}|k\rangle\!\langle k|\left(  \partial_{j}G(\theta)\right)  \sum
_{l}e^{i\lambda_{l}t-\lambda_{l}}|l\rangle\!\langle l|\nonumber\\
&  \hspace{8cm}+\sum_{k}e^{-i\lambda_{k}t-\lambda_{k}}|k\rangle\!\langle
k|\left(  \partial_{j}G(\theta)\right)  \sum_{l}e^{i\lambda_{l}t}
|l\rangle\!\langle l|\Bigg)\Bigg)\\
&  =-\frac{1}{2}\left(  \int_{\mathbb{R}}dt\ p(t)\left(  e^{-iG(\theta
)t}\left(  \partial_{j}G(\theta)\right)  e^{iG(\theta)t}e^{-G(\theta
)}+e^{-G(\theta)}e^{-iG(\theta)t}\left(  \partial_{j}G(\theta)\right)
e^{iG(\theta)t}\right)  \right)  \\
&  =-\frac{1}{2}\left(  \left(  \int_{\mathbb{R}}dt\ p(t)e^{-iG(\theta
)t}\left(  \partial_{j}G(\theta)\right)  e^{iG(\theta)t}\right)
e^{-G(\theta)}+e^{-G(\theta)}\left(  \int_{\mathbb{R}}dt\ p(t)e^{-iG(\theta
)t}\left(  \partial_{j}G(\theta)\right)  e^{iG(\theta)t}\right)  \right)  \\
&  =-\frac{1}{2}\left(  \left(  \int_{\mathbb{R}}dt\ p(t)e^{iG(\theta
)t}\left(  \partial_{j}G(\theta)\right)  e^{-iG(\theta)t}\right)
e^{-G(\theta)}+e^{-G(\theta)}\left(  \int_{\mathbb{R}}dt\ p(t)e^{iG(\theta
)t}\left(  \partial_{j}G(\theta)\right)  e^{-iG(\theta)t}\right)  \right)  \label{eq:pt-even} \\
&  =-\frac{1}{2}\left(  \Phi_{\theta}(\partial_{j}G(\theta))e^{-G(\theta
)}+e^{-G(\theta)}\Phi_{\theta}(\partial_{j}G(\theta))\right)  \\
&  =-\frac{1}{2}\left\{  \Phi_{\theta}(\partial_{j}G(\theta)),e^{-G(\theta
)}\right\}  ,
\end{align}
where, in the penultimate equality, we use the definition of the quantum
channel $\Phi_{\theta}$ from~\eqref{eq:Phi-channel}. For \eqref{eq:pt-even}, we also used the fact that $p(t)$ is an even function.

\section{Proof of Proposition~\ref{prop:FB-KM-info-mats}}

\label{app:FB-KM-proofs}

\subsection{Proof of Equation~\eqref{eq:FB-info-mat-gen-thermal}}

Here we follow the proof of \cite[Theorem~1]{patel2024naturalgradient} quite closely, with the main change being that $G_i$ gets replaced by $\partial_i G(\theta)$. We provide a detailed proof for completeness.

Plugging~\eqref{eq:thermal-state-deriv} into~\eqref{eq:qfim-explicit}, we find
that we should consider the following term:
\begin{align}
\langle k|\partial_{i}\rho(\theta)|\ell\rangle &  =\langle k|\left[  -\frac
{1}{2}\left\{  \Phi_{\theta}(\left(  \partial_{i}G(\theta)\right)
),\rho(\theta)\right\}  +\rho(\theta)\left\langle \partial_{i}G(\theta
)\right\rangle _{\rho(\theta)}\right]  |\ell\rangle\\
&  =-\frac{1}{2}\langle k|\left\{  \Phi_{\theta}(\partial_{i}G(\theta
)),\rho(\theta)\right\}  |\ell\rangle+\langle k|\rho(\theta)|\ell
\rangle\left\langle \partial_{i}G(\theta)\right\rangle _{\rho(\theta)}\\
&  =-\frac{1}{2}\left(  \langle k|\Phi_{\theta}(\partial_{i}G(\theta
))|\ell\rangle\lambda_{\ell}+\langle k|\Phi_{\theta}(\partial_{i}
G(\theta))|\ell\rangle\lambda_{k}\right)  +\delta_{k\ell}\lambda_{\ell
}\left\langle \partial_{i}G(\theta)\right\rangle _{\rho(\theta)}\\
&  =-\frac{1}{2}\langle k|\Phi_{\theta}(\partial_{i}G(\theta))|\ell
\rangle\left(  \lambda_{k}+\lambda_{\ell}\right)  +\delta_{k\ell}\lambda
_{\ell}\left\langle \partial_{i}G(\theta)\right\rangle _{\rho(\theta
)}.\label{eq:basic-calc-deriv}
\end{align}
This implies that
\begin{equation}
\langle\ell|\partial_{j}\rho(\theta)|k\rangle=-\frac{1}{2}\langle\ell
|\Phi_{\theta}(\partial_{j}G(\theta))|k\rangle\left(  \lambda_{k}
+\lambda_{\ell}\right)  +\delta_{k\ell}\lambda_{\ell}\left\langle \partial
_{j}G(\theta)\right\rangle _{\rho(\theta)}\label{eq:basic-calc-deriv-HC}.
\end{equation}
Plugging~\eqref{eq:basic-calc-deriv} and~\eqref{eq:basic-calc-deriv-HC}
into~\eqref{eq:qfim-explicit}, we find that
\begin{align}
I_{ij}^{F} &  =2\sum_{k,\ell}\frac{\langle k|\partial_{i}\rho(\theta
)|\ell\rangle\!\langle\ell|\partial_{j}\rho(\theta)|k\rangle}{\lambda
_{k}+\lambda_{\ell}}\nonumber\\
&  =2\sum_{k,\ell}\frac{\left[
\begin{array}
[c]{c}
\left(  -\frac{1}{2}\langle k|\Phi_{\theta}(\partial_{i}G(\theta))|\ell
\rangle\left(  \lambda_{k}+\lambda_{\ell}\right)  +\delta_{k\ell}\lambda
_{\ell}\left\langle \partial_{i}G(\theta)\right\rangle _{\rho(\theta)}\right)
\times\\
\left(  -\frac{1}{2}\langle\ell|\Phi_{\theta}(\partial_{j}G(\theta
))|k\rangle\left(  \lambda_{k}+\lambda_{\ell}\right)  +\delta_{k\ell}
\lambda_{\ell}\left\langle \partial_{j}G(\theta)\right\rangle _{\rho(\theta
)}\right)
\end{array}
\right]  }{\lambda_{k}+\lambda_{\ell}}\\
&  =2\sum_{k,\ell}\frac{1}{4}\frac{\langle k|\Phi_{\theta}(\partial
_{i}G(\theta))|\ell\rangle\!\langle\ell|\Phi_{\theta}(\partial_{j}
G(\theta))|k\rangle\left(  \lambda_{k}+\lambda_{\ell}\right)  ^{2}}
{\lambda_{k}+\lambda_{\ell}}\nonumber\\
&  \qquad+2\sum_{k,\ell}\left(  -\frac{1}{2}\right)  \frac{\langle
k|\Phi_{\theta}(\partial_{i}G(\theta))|\ell\rangle\left(  \lambda_{k}
+\lambda_{\ell}\right)  \delta_{k\ell}\lambda_{\ell}\left\langle \partial
_{j}G(\theta)\right\rangle _{\rho(\theta)}}{\lambda_{k}+\lambda_{\ell}
}\nonumber\\
&  \qquad+2\sum_{k,\ell}\left(  -\frac{1}{2}\right)  \frac{\langle\ell
|\Phi_{\theta}(\partial_{j}G(\theta))|k\rangle\left(  \lambda_{k}
+\lambda_{\ell}\right)  \delta_{k\ell}\lambda_{\ell}\left\langle \partial
_{i}G(\theta)\right\rangle _{\rho(\theta)}}{\lambda_{k}+\lambda_{\ell}
}\nonumber\\
&  \qquad+2\sum_{k,\ell}\frac{\delta_{k\ell}\lambda_{\ell}\left\langle
\partial_{i}G(\theta)\right\rangle _{\rho(\theta)}\delta_{k\ell}\lambda_{\ell
}\left\langle \partial_{j}G(\theta)\right\rangle _{\rho(\theta)}}{\lambda
_{k}+\lambda_{\ell}}\\
&  =\frac{1}{2}\sum_{k,\ell}\langle k|\Phi_{\theta}(\partial_{i}
G(\theta))|\ell\rangle\langle\ell|\Phi_{\theta}(\partial_{j}G(\theta
))|k\rangle\left(  \lambda_{k}+\lambda_{\ell}\right)  \nonumber\\
&  \qquad-\sum_{k,\ell}\langle k|\Phi_{\theta}(\partial_{i}G(\theta
))|\ell\rangle\delta_{k\ell}\lambda_{\ell}\left\langle \partial_{j}
G(\theta)\right\rangle _{\rho(\theta)}-\sum_{k,\ell}\langle\ell|\Phi_{\theta
}(\partial_{j}G(\theta))|k\rangle\delta_{k\ell}\lambda_{\ell}\left\langle
\partial_{i}G(\theta)\right\rangle _{\rho(\theta)}\nonumber\\
&  \qquad+2\sum_{k,\ell}\frac{\delta_{k\ell}\lambda_{\ell}\left\langle
\partial_{i}G(\theta)\right\rangle _{\rho(\theta)}\delta_{k\ell}\lambda_{\ell
}\left\langle \partial_{j}G(\theta)\right\rangle _{\rho(\theta)}}{\lambda
_{k}+\lambda_{\ell}}\\
&  =\frac{1}{2}\sum_{k,\ell}\langle k|\Phi_{\theta}(\partial_{i}
G(\theta))|\ell\rangle\langle\ell|\Phi_{\theta}(\partial_{j}G(\theta
))|k\rangle\lambda_{k}+\frac{1}{2}\sum_{k,\ell}\langle k|\Phi_{\theta
}(\partial_{i}G(\theta))|\ell\rangle\!\langle\ell|\Phi_{\theta}(\partial
_{j}G(\theta))|k\rangle\lambda_{\ell}\nonumber\\
&  \qquad-\sum_{k}\langle k|\Phi_{\theta}(\partial_{i}G(\theta))|k\rangle
\lambda_{k}\left\langle \partial_{j}G(\theta)\right\rangle _{\rho(\theta
)}-\sum_{k}\langle k|\Phi_{\theta}(\partial_{j}G(\theta))|k\rangle\lambda
_{k}\left\langle \partial_{i}G(\theta)\right\rangle _{\rho(\theta)}\nonumber\\
&  \qquad+2\sum_{k}\frac{\lambda_{k}^{2}\left\langle \partial_{i}
G(\theta)\right\rangle _{\rho(\theta)}\left\langle \partial_{j}G(\theta
)\right\rangle _{\rho(\theta)}}{2\lambda_{k}}\\
&  =\frac{1}{2}\operatorname{Tr}[\Phi_{\theta}(\partial_{j}G(\theta
))\rho(\theta)\Phi_{\theta}(G_{i})]+\frac{1}{2}\operatorname{Tr}[\Phi_{\theta
}(\partial_{i}G(\theta))\rho(\theta)\Phi_{\theta}(\partial_{j}G(\theta
))]\nonumber\\
&  \qquad-\operatorname{Tr}[\rho(\theta)\Phi_{\theta}(\partial_{i}
G(\theta))]\left\langle \partial_{j}G(\theta)\right\rangle _{\rho(\theta
)}-\operatorname{Tr}[\rho(\theta)\Phi_{\theta}(\partial_{j}G(\theta
))]\left\langle \partial_{i}G(\theta)\right\rangle _{\rho(\theta)}\nonumber\\
&  \qquad+\left\langle \partial_{i}G(\theta)\right\rangle _{\rho(\theta
)}\left\langle \partial_{j}G(\theta)\right\rangle _{\rho(\theta)}\\
&  =\frac{1}{2}\operatorname{Tr}[\Phi_{\theta}(\partial_{j}G(\theta
))\rho(\theta)\Phi_{\theta}(\partial_{i}G(\theta))]+\frac{1}{2}
\operatorname{Tr}[\Phi_{\theta}(\partial_{i}G(\theta))\rho(\theta)\Phi
_{\theta}(\partial_{j}G(\theta))]\nonumber\\
&  \qquad-\left\langle \partial_{i}G(\theta)\right\rangle _{\rho(\theta
)}\left\langle \partial_{j}G(\theta)\right\rangle _{\rho(\theta)}\\
&  =\frac{1}{2}\left\langle \left\{  \Phi_{\theta}(\partial_{i}G(\theta
)),\Phi_{\theta}(\partial_{j}G(\theta))\right\}  \right\rangle _{\rho(\theta
)}-\left\langle \partial_{i}G(\theta)\right\rangle _{\rho(\theta)}\left\langle
\partial_{j}G(\theta)\right\rangle _{\rho(\theta)}.
\end{align}

\subsection{Proof of Equation~\eqref{eq:KM-info-mat-gen-thermal}}

Here we follow the proof of \cite[Theorem~2]{patel2024naturalgradient} quite closely, with the main change being that $G_i$ gets replaced by $\partial_i G(\theta)$. We provide a detailed proof for completeness.

Consider that
\begin{align}
I_{ij}^{\operatorname{KM}}(\theta) &  =\sum_{k,\ell}\frac{\ln\lambda_{k}
-\ln\lambda_{\ell}}{\lambda_{k}-\lambda_{\ell}}\langle k|\partial_{i}
\rho(\theta)|\ell\rangle\!\langle\ell|\partial_{j}\rho(\theta)|k\rangle\\
&  =\sum_{k,\ell}\frac{\ln\lambda_{k}-\ln\lambda_{\ell}}{\lambda_{k}
-\lambda_{\ell}}\left[
\begin{array}
[c]{c}
\left(  -\frac{1}{2}\langle k|\Phi_{\theta}(\partial_{i}G(\theta))|\ell
\rangle\left(  \lambda_{k}+\lambda_{\ell}\right)  +\delta_{k\ell}\lambda
_{\ell}\left\langle \partial_{i}G(\theta)\right\rangle _{\rho(\theta)}\right)
\times\\
\left(  -\frac{1}{2}\langle\ell|\Phi_{\theta}(\partial_{j}G(\theta
))|k\rangle\left(  \lambda_{k}+\lambda_{\ell}\right)  +\delta_{k\ell}
\lambda_{\ell}\left\langle \partial_{j}G(\theta)\right\rangle _{\rho(\theta
)}\right)
\end{array}
\right]  \\
&  =\sum_{k,\ell}\frac{1}{4}\frac{\ln\lambda_{k}-\ln\lambda_{\ell}}
{\lambda_{k}-\lambda_{\ell}}\langle k|\Phi_{\theta}(\partial_{i}
G(\theta))|\ell\rangle\!\langle\ell|\Phi_{\theta}(\partial_{j}G(\theta
))|k\rangle\left(  \lambda_{k}+\lambda_{\ell}\right)  ^{2}\nonumber\\
&  \qquad+\sum_{k,\ell}-\frac{1}{2}\frac{\ln\lambda_{k}-\ln\lambda_{\ell}
}{\lambda_{k}-\lambda_{\ell}}\langle k|\Phi_{\theta}(\partial_{i}
G(\theta))|\ell\rangle\left(  \lambda_{k}+\lambda_{\ell}\right)  \delta
_{k\ell}\lambda_{\ell}\left\langle \partial_{j}G(\theta)\right\rangle
_{\rho(\theta)}\nonumber\\
&  \qquad+\sum_{k,\ell}-\frac{1}{2}\frac{\ln\lambda_{k}-\ln\lambda_{\ell}
}{\lambda_{k}-\lambda_{\ell}}\langle\ell|\Phi_{\theta}(\partial_{j}
G(\theta))|k\rangle\left(  \lambda_{k}+\lambda_{\ell}\right)  \delta_{k\ell
}\lambda_{\ell}\left\langle \partial_{i}G(\theta)\right\rangle _{\rho(\theta
)}\nonumber\\
&  \qquad+\sum_{k,\ell}\frac{\ln\lambda_{k}-\ln\lambda_{\ell}}{\lambda
_{k}-\lambda_{\ell}}\delta_{k\ell}\lambda_{\ell}\left\langle \partial
_{i}G(\theta)\right\rangle _{\rho(\theta)}\delta_{k\ell}\lambda_{\ell
}\left\langle \partial_{j}G(\theta)\right\rangle _{\rho(\theta)}\\
&  =\sum_{k,\ell}\frac{1}{4}\frac{(\ln\lambda_{k}-\ln\lambda_{\ell})\left(
\lambda_{k}+\lambda_{\ell}\right)  ^{2}}{\lambda_{k}-\lambda_{\ell}}\langle
k|\Phi_{\theta}(\partial_{i}G(\theta))|\ell\rangle\!\langle\ell|\Phi_{\theta
}(\partial_{j}G(\theta))|k\rangle\nonumber\\
&  \qquad+\sum_{k}\left(  -\frac{1}{2}\right)  \frac{1}{\lambda_{k}}\langle
k|\Phi_{\theta}(\partial_{i}G(\theta))|k\rangle\left(  2\lambda_{k}\right)
\lambda_{k}\left\langle \partial_{j}G(\theta)\right\rangle _{\rho(\theta
)}\nonumber\\
&  \qquad+\sum_{k}\left(  -\frac{1}{2}\right)  \frac{1}{\lambda_{k}}\langle
k|\Phi_{\theta}(\partial_{j}G(\theta))|k\rangle\left(  2\lambda_{k}\right)
\lambda_{k}\left\langle \partial_{i}G(\theta)\right\rangle _{\rho(\theta
)}\nonumber\\
&  \qquad+\sum_{k}\frac{1}{\lambda_{k}}\lambda_{k}\left\langle \partial
_{i}G(\theta)\right\rangle _{\rho(\theta)}\lambda_{k}\left\langle \partial
_{j}G(\theta)\right\rangle _{\rho(\theta)}\\
&  =\sum_{k,\ell}\frac{1}{4}\frac{(\ln\lambda_{k}-\ln\lambda_{\ell})\left(
\lambda_{k}+\lambda_{\ell}\right)  ^{2}}{\lambda_{k}-\lambda_{\ell}}\langle
k|\Phi_{\theta}(\partial_{i}G(\theta))|\ell\rangle\!\langle\ell|\Phi_{\theta
}(\partial_{j}G(\theta))|k\rangle\nonumber\\
&  \qquad-\sum_{k}\langle k|\Phi_{\theta}(\partial_{i}G(\theta))|k\rangle
\lambda_{k}\left\langle \partial_{j}G(\theta)\right\rangle _{\rho(\theta
)}-\sum_{k}\langle k|\Phi_{\theta}(\partial_{j}G(\theta))|k\rangle\lambda
_{k}\left\langle \partial_{i}G(\theta)\right\rangle _{\rho(\theta)}\\
&  \qquad+\sum_{k}\left\langle \partial_{i}G(\theta)\right\rangle
_{\rho(\theta)}\lambda_{k}\left\langle \partial_{j}G(\theta)\right\rangle
_{\rho(\theta)}\\
&  =\sum_{k,\ell}\frac{1}{4}\frac{(\ln\lambda_{k}-\ln\lambda_{\ell})\left(
\lambda_{k}+\lambda_{\ell}\right)  ^{2}}{\lambda_{k}-\lambda_{\ell}}\langle
k|\Phi_{\theta}(\partial_{i}G(\theta))|\ell\rangle\!\langle\ell|\Phi_{\theta
}(\partial_{j}G(\theta))|k\rangle\nonumber\\
&  \qquad-\operatorname{Tr}\left[  \rho(\theta)\Phi_{\theta}(\partial
_{i}G(\theta))\right]  \left\langle \partial_{j}G(\theta)\right\rangle
_{\rho(\theta)}-\operatorname{Tr}\left[  \rho(\theta)\Phi_{\theta}
(\partial_{j}G(\theta))\right]  \left\langle \partial_{i}G(\theta
)\right\rangle _{\rho(\theta)}+\left\langle \partial_{i}G(\theta)\right\rangle
_{\rho(\theta)}\left\langle \partial_{j}G(\theta)\right\rangle _{\rho(\theta
)}\\
&  =\sum_{k,\ell}\frac{1}{4}\frac{(\ln\lambda_{k}-\ln\lambda_{\ell})\left(
\lambda_{k}+\lambda_{\ell}\right)  ^{2}}{\lambda_{k}-\lambda_{\ell}}\langle
k|\Phi_{\theta}(\partial_{i}G(\theta))|\ell\rangle\!\langle\ell|\Phi_{\theta
}(\partial_{j}G(\theta))|k\rangle-\left\langle \partial_{i}G(\theta
)\right\rangle _{\rho(\theta)}\left\langle \partial_{j}G(\theta)\right\rangle
_{\rho(\theta)}.\label{eq:KM-expansion}
\end{align}
The fourth equality is a consequence of the following fact:
\begin{equation}
\lim_{x\rightarrow y}\frac{\ln x-\ln y}{x-y}=\frac{1}{y}.
\end{equation}
Now suppose that $G(\theta)=\sum_{k}\mu_{k}|k\rangle\!\langle k|$. This
implies that for all $k$, we have $\lambda_{k}=\frac{e^{-\mu_{k}}}{Z}$, where
$Z$ is the partition function. Plugging this into~\eqref{eq:KM-expansion}, we
obtain:
\begin{align}
&  \sum_{k,\ell}\frac{1}{4}\frac{(\ln\lambda_{k}-\ln\lambda_{\ell})\left(
\lambda_{k}+\lambda_{\ell}\right)  ^{2}}{\lambda_{k}-\lambda_{\ell}}\langle
k|\Phi_{\theta}(\partial_{i}G(\theta))|\ell\rangle\!\langle\ell|\Phi_{\theta
}(\partial_{j}G(\theta))|k\rangle\nonumber\\
&  =\sum_{k,\ell}\frac{1}{4}\frac{\left(  \ln\frac{e^{-\mu_{k}}}{Z}-\ln
\frac{e^{-\mu_{\ell}}}{Z}\right)  \left(  \frac{e^{-\mu_{k}}}{Z}+\frac
{e^{-\mu_{\ell}}}{Z}\right)  ^{2}}{\frac{e^{-\mu_{k}}}{Z}-\frac{e^{-\mu_{\ell
}}}{Z}}\langle k|\Phi_{\theta}(\partial_{i}G(\theta))|\ell\rangle\!\langle
\ell|\Phi_{\theta}(\partial_{j}G(\theta))|k\rangle\\
&  =\sum_{k,\ell}\frac{1}{4Z}\frac{\left(  -\mu_{k}+\mu_{\ell}\right)  \left(
e^{-\mu_{k}}+e^{-\mu_{\ell}}\right)  ^{2}}{e^{-\mu_{k}}-e^{-\mu_{\ell}}
}\langle k|\Phi_{\theta}(\partial_{i}G(\theta))|\ell\rangle\!\langle\ell
|\Phi_{\theta}(\partial_{j}G(\theta))|k\rangle\\
&  =\sum_{k,\ell}\frac{1}{4Z}\frac{\left(  -\mu_{k}+\mu_{\ell}\right)  \left(
e^{-\mu_{k}}+e^{-\mu_{\ell}}\right)  }{\frac{e^{-\mu_{k}}-e^{-\mu_{\ell}}
}{e^{-\mu_{k}}+e^{-\mu_{\ell}}}}\langle k|\Phi_{\theta}(\partial_{i}
G(\theta))|\ell\rangle\!\langle\ell|\Phi_{\theta}(\partial_{j}G(\theta
))|k\rangle\\
&  =\sum_{k,\ell}\frac{1}{4Z}\frac{\left(  -\mu_{k}+\mu_{\ell}\right)  \left(
e^{-\mu_{k}}+e^{-\mu_{\ell}}\right)  }{\frac{e^{-\mu_{k}+\mu_{\ell}}
-1}{e^{-\mu_{k}+\mu_{\ell}}+1}}\langle k|\Phi_{\theta}(\partial_{i}
G(\theta))|\ell\rangle\!\langle\ell|\Phi_{\theta}(\partial_{j}G(\theta
))|k\rangle\\
&  =\sum_{k,\ell}\frac{1}{4Z}\frac{\left(  -\mu_{k}+\mu_{\ell}\right)  \left(
e^{-\mu_{k}}+e^{-\mu_{\ell}}\right)  }{\tanh\!\left(  \frac{-\mu_{k}+\mu
_{\ell}}{2}\right)  }\langle k|\Phi_{\theta}(\partial_{i}G(\theta
))|\ell\rangle\!\langle\ell|\Phi_{\theta}(\partial_{j}G(\theta))|k\rangle\\
&  =\sum_{k,\ell}\frac{1}{4Z}\frac{\left(  -\mu_{k}+\mu_{\ell}\right)  \left(
e^{-\mu_{k}}+e^{-\mu_{\ell}}\right)  }{\tanh\!\left(  \frac{-\mu_{k}+\mu
_{\ell}}{2}\right)  }\langle k|\int_{\mathbb{R}}dt\ p(t)\ e^{iG(\theta
)t}\partial_{i}G(\theta)e^{-iG(\theta)t}|\ell\rangle\!\langle\ell|\Phi
_{\theta}(\partial_{j}G(\theta))|k\rangle\\
&  =\sum_{k,\ell}\frac{1}{4Z}\frac{\left(  -\mu_{k}+\mu_{\ell}\right)  \left(
e^{-\mu_{k}}+e^{-\mu_{\ell}}\right)  }{\tanh\!\left(  \frac{-\mu_{k}+\mu
_{\ell}}{2}\right)  }\times\nonumber\\
&  \qquad\langle k|\int_{\mathbb{R}}dt\ p(t)\ \left(  \sum_{m}|m\rangle
\!\langle m|e^{i\mu_{m}t}\right)  \partial_{i}G(\theta)\left(  \sum
_{n}|n\rangle\!\langle n|e^{-i\mu_{n}t}\right)  |\ell\rangle\!\langle\ell
|\Phi_{\theta}(\partial_{j}G(\theta))|k\rangle\\
&  =\sum_{k,\ell}\frac{1}{4Z}\frac{\left(  -\mu_{k}+\mu_{\ell}\right)  \left(
e^{-\mu_{k}}+e^{-\mu_{\ell}}\right)  }{\tanh\!\left(  \frac{-\mu_{k}+\mu
_{\ell}}{2}\right)  }\int_{\mathbb{R}}dt\ p(t)\ e^{i\mu_{k}t}\langle
k|\partial_{i}G(\theta)|\ell\rangle e^{-i\mu_{\ell}t}\langle\ell|\Phi_{\theta
}(\partial_{j}G(\theta))|k\rangle\\
&  =\sum_{k,\ell}\frac{1}{4Z}\frac{\left(  -\mu_{k}+\mu_{\ell}\right)  \left(
e^{-\mu_{k}}+e^{-\mu_{\ell}}\right)  }{\tanh\!\left(  \frac{-\mu_{k}+\mu
_{\ell}}{2}\right)  }\int_{\mathbb{R}}dt\ p(t)\ e^{-i(\mu_{\ell}-\mu_{k}
)t}\langle k|\partial_{i}G(\theta)|\ell\rangle\langle\ell|\Phi_{\theta
}(\partial_{j}G(\theta))|k\rangle\\
&  =\sum_{k,\ell}\frac{1}{4Z}\frac{\left(  -\mu_{k}+\mu_{\ell}\right)  \left(
e^{-\mu_{k}}+e^{-\mu_{\ell}}\right)  }{\tanh\!\left(  \frac{-\mu_{k}+\mu
_{\ell}}{2}\right)  }\frac{\tanh\!\left(  \frac{-\mu_{k}+\mu_{\ell}}
{2}\right)  }{\frac{\left(  -\mu_{k}+\mu_{\ell}\right)  }{2}}\langle
k|\partial_{i}G(\theta)|\ell\rangle\langle\ell|\Phi_{\theta}(\partial
_{j}G(\theta))|k\rangle\\
&  =\sum_{k,\ell}\frac{1}{2Z}\left(  e^{-\mu_{k}}+e^{-\mu_{\ell}}\right)
\langle k|\partial_{i}G(\theta)|\ell\rangle\langle\ell|\Phi_{\theta}
(\partial_{j}G(\theta))|k\rangle\\
&  =\sum_{k,\ell}\frac{1}{2}\left(  \frac{e^{-\mu_{k}}}{Z}+\frac{e^{-\mu
_{\ell}}}{Z}\right)  \langle k|\partial_{i}G(\theta)|\ell\rangle\langle
\ell|\Phi_{\theta}(\partial_{j}G(\theta))|k\rangle\\
&  =\sum_{k,\ell}\frac{1}{2}\left(  \lambda_{k}+\lambda_{\ell}\right)  \langle
k|\partial_{i}G(\theta)|\ell\rangle\langle\ell|\Phi_{\theta}(\partial
_{j}G(\theta))|k\rangle\\
&  =\frac{1}{2}\operatorname{Tr}\left[  \rho(\theta)\partial_{i}G(\theta
)\Phi_{\theta}(\partial_{j}G(\theta))\right]  +\frac{1}{2}\operatorname{Tr}
\left[  G_{i}\rho(\theta)\Phi_{\theta}(\partial_{j}G(\theta))\right]  \\
&  =\frac{1}{2}\operatorname{Tr}\left[  \{\partial_{i}G(\theta),\Phi_{\theta
}(\partial_{j}G(\theta))\}\rho(\theta)\right]  .\label{eq:last-line-KM}
\end{align}
When combining~\eqref{eq:last-line-KM} with~\eqref{eq:KM-expansion}, the proof
is concluded. Let us note that we made use of the Fourier transform relation reviewed in \cite[Lemma~12]{patel2024quantumboltzmann}.

\section{Proof of Proposition~\ref{prop:FB-info-mats-alt}}

\label{app:alt-form-FB-inf-mat}Consider that
\begin{align}
&  \left\langle \left\{  \Phi_{\theta}(\partial_{i}G(\theta)),\Phi_{\theta
}(\partial_{j}G(\theta))\right\}  \right\rangle _{\rho(\theta)}\nonumber\\
&  =\int_{\mathbb{R}}\int_{\mathbb{R}}dt_{1}\ dt_{2}\ p(t_{1})\ p(t_{2}
)\ \left\langle \left\{  e^{iG(\theta)t_{1}}(\partial_{i}G(\theta
))e^{-iG(\theta)t_{1}},e^{iG(\theta)t_{2}}(\partial_{j}G(\theta))e^{-iG(\theta
)t_{2}}\right\}  \right\rangle _{\rho(\theta)}\\
&  =\int_{\mathbb{R}}\int_{\mathbb{R}}dt_{1}\ dt_{2}\ p(t_{1})\ p(t_{2}
)\ \operatorname{Tr}\left[  \left\{  e^{iG(\theta)t_{1}}(\partial_{i}
G(\theta))e^{-iG(\theta)t_{1}},e^{iG(\theta)t_{2}}(\partial_{j}G(\theta
))e^{-iG(\theta)t_{2}}\right\}  \rho(\theta)\right]  \\
&  =\int_{\mathbb{R}}\int_{\mathbb{R}}dt_{1}\ dt_{2}\ p(t_{1})\ p(t_{2}
)\ \operatorname{Tr}\left[  \left\{  (\partial_{i}G(\theta)),e^{iG(\theta
)\left(  t_{2}-t_{1}\right)  }(\partial_{j}G(\theta))e^{-iG(\theta)\left(
t_{2}-t_{1}\right)  }\right\}  \rho(\theta)\right]  \\
&  =\int_{\mathbb{R}}\int_{\mathbb{R}}d\tau\ dt\ p(\tau)\ p(t+\tau
)\ \operatorname{Tr}\left[  \left\{  (\partial_{i}G(\theta)),e^{iG(\theta
)t}(\partial_{j}G(\theta))e^{-iG(\theta)t}\right\}  \rho(\theta)\right]  \\
&  =\int_{\mathbb{R}}\ dt\ q(t)\ \operatorname{Tr}\left[  \left\{
(\partial_{i}G(\theta)),e^{iG(\theta)t}(\partial_{j}G(\theta))e^{-iG(\theta
)t}\right\}  \rho(\theta)\right]  \\
&  =\operatorname{Tr}\left[  \left\{  (\partial_{i}G(\theta)),\Psi_{\theta
}(\partial_{j}G(\theta))\right\}  \rho(\theta)\right]  \\
&  =\left\langle \left\{  \partial_{i}G(\theta)),\Psi_{\theta}(\partial
_{j}G(\theta))\right\}  \right\rangle _{\rho(\theta)}.
\end{align}
The third equality follows because $\left[  e^{iG(\theta)t},\rho
(\theta)\right]  =0$ for all $t\in\mathbb{R}$. The fourth equality follows by
setting $\tau=t_{1}$ and $t=t_{2}-t_{1}$. The fifth equality uses the
definition of $q(t)$ in~\eqref{eq:convolve-high-peak-tent}. 

\section{Proof of Proposition~\ref{prop:main-sld}}

\label{app:proof-SLD}

Here we follow the proof of \cite[Theorem~3]{patel2024naturalgradient} quite closely, with the main change being that $G_j$ gets replaced by $\partial_j G(\theta)$. We provide a detailed proof for completeness.

Recall that the $(k, \ell)$-th element of the SLD operator $L^{(j)}(\theta)$ is given as follows \cite[Eq.~(86)]{Sidhu2020}:
\begin{align}
    L_{k\ell}^{(j)}(\theta) = \frac{2\langle k|\partial_{j}\rho(\theta
)|\ell\rangle}{\lambda
_{k}+\lambda_{\ell}}.
\end{align}
Plugging~\eqref{eq:basic-calc-deriv-HC} into the above equation, we find that
\begin{align}
    L_{k\ell}^{(j)}(\theta) & = \frac{2\left(  -\frac{1}{2}\langle k|\Phi_{\theta}(\partial_j G(\theta))|\ell\rangle\left(
\lambda_{k}+\lambda_{\ell}\right)  +\delta_{k,\ell}\lambda_{\ell}\left\langle
\partial_j G(\theta)\right\rangle _{\rho(\theta)}\right) }{\lambda
_{k}+\lambda_{\ell}}\\
& = \frac{ -\langle k|\Phi_{\theta}(\partial_j G(\theta))|\ell\rangle\left(
\lambda_{k}+\lambda_{\ell}\right)}{\lambda
_{k}+\lambda_{\ell}} + \frac{2\delta_{k,\ell}\lambda_{\ell}\left\langle
\partial_j G(\theta)\right\rangle _{\rho(\theta)}}{\lambda
_{k}+\lambda_{\ell}}\\
& = -\langle k|\Phi_{\theta}(\partial_j G(\theta))|\ell\rangle + \frac{2\lambda_{k}\left\langle
\partial_j G(\theta)\right\rangle _{\rho(\theta)}}{2\lambda
_{k}}\delta_{k,\ell}\\
& = -\langle k|\Phi_{\theta}(\partial_j G(\theta))|\ell\rangle + \left\langle
\partial_j G(\theta)\right\rangle _{\rho(\theta)}\delta_{k,\ell}.
\end{align}
This implies that
\begin{align}
    L^{(j)}(\theta) & = \sum_{k, \ell}L_{k\ell}^{(j)}(\theta) |k\rangle\!\langle\ell|\\
    &= \sum_{k, \ell}\left(-\langle k|\Phi_{\theta}(\partial_j G(\theta))|\ell\rangle + \left\langle
\partial_j G(\theta)\right\rangle _{\rho(\theta)}\delta_{k,\ell}\right) |k\rangle\!\langle\ell|\\
& = -\Phi_{\theta}(\partial_j G(\theta)) + \left\langle
\partial_j G(\theta)\right\rangle _{\rho(\theta)} I.
\end{align}

\section{Action of quadratic Hamiltonians on quadrature operators}

\begin{lemma}
\label{lem:BCH-etc}For $\hat{H}\equiv\frac{1}{2}(\hat{x}^{c})^{T}H\hat{x}^{c}$
and $\hat{x}^{c}\equiv\hat{x}-\mu$, the following equality holds:
\begin{equation}
e^{i\hat{H}t}\hat{x}_{k}^{c}e^{-i\hat{H}t}=\sum_{\ell}S_{k,\ell}(t)\hat
{x}_{\ell}^{c},
\end{equation}
where $S(t)\coloneqq e^{\Omega Ht}$.
\end{lemma}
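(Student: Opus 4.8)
The plan is to work in the Heisenberg picture: define the time-evolved operator $\hat{x}_k^c(t) \coloneqq e^{i\hat{H}t}\hat{x}_k^c e^{-i\hat{H}t}$ and show that, as a function of $t$, the vector of these operators satisfies a closed linear system of operator-valued ODEs whose coefficient matrix is exactly $\Omega H$. Differentiating gives $\frac{d}{dt}\hat{x}_k^c(t) = e^{i\hat{H}t}\, i[\hat{H},\hat{x}_k^c]\, e^{-i\hat{H}t}$, so everything hinges on evaluating the commutator $i[\hat{H},\hat{x}_k^c]$ and verifying that it is again a linear combination of the centered quadratures $\hat{x}_\ell^c$.

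First I would observe that the centered operators obey the same canonical commutation relations as the uncentered ones: since $\hat{x}_j^c = \hat{x}_j - \mu_j$ with the $\mu_j$ being c-numbers, we have $[\hat{x}_j^c,\hat{x}_k^c] = [\hat{x}_j,\hat{x}_k] = i\Omega_{j,k}$ from~\eqref{eq:symplectic-form}. Writing $\hat{H} = \frac{1}{2}\sum_{i,j}H_{ij}\hat{x}_i^c\hat{x}_j^c$ and applying the Leibniz rule $[AB,C] = A[B,C] + [A,C]B$, each term $[\hat{x}_i^c\hat{x}_j^c,\hat{x}_k^c]$ reduces to $i\Omega_{j,k}\hat{x}_i^c + i\Omega_{i,k}\hat{x}_j^c$. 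Summing against $H_{ij}$ and using the symmetry $H_{ij}=H_{ji}$, the two resulting sums coincide after relabeling and collapse to $i[\hat{H},\hat{x}_k^c] = -\sum_\ell (H\Omega)_{\ell,k}\,\hat{x}_\ell^c$.

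The one step requiring care, which I expect to be the main obstacle, is converting the index pattern $(H\Omega)_{\ell,k}$ into the intended generator $\Omega H$ with the correct sign. Here I would use $(H\Omega)^T = \Omega^T H^T = -\Omega H$, invoking that $\Omega$ is antisymmetric and $H$ symmetric, so that $(H\Omega)_{\ell,k} = -(\Omega H)_{k,\ell}$ and hence $i[\hat{H},\hat{x}_k^c] = \sum_\ell (\Omega H)_{k,\ell}\,\hat{x}_\ell^c$. This transpose-and-sign bookkeeping is precisely what pins down $S(t)=e^{\Omega H t}$ rather than some incorrectly transposed or sign-flipped exponential, so it is the place where an error would most easily creep in.

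With the commutator in hand, conjugation by $e^{i\hat{H}t}$ is linear and preserves the relation entry by entry, so the vector $\hat{x}^c(t)$ with components $\hat{x}_k^c(t)$ satisfies the linear operator-valued ODE $\frac{d}{dt}\hat{x}^c(t) = (\Omega H)\,\hat{x}^c(t)$ with initial condition $\hat{x}^c(0) = \hat{x}^c$. Integrating this yields $\hat{x}_k^c(t) = \sum_\ell (e^{\Omega H t})_{k,\ell}\,\hat{x}_\ell^c = \sum_\ell S_{k,\ell}(t)\,\hat{x}_\ell^c$, which is the claimed identity.
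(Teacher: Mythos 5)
Your proof is correct, and it takes a genuinely different route from the paper's. The paper proves the lemma via Hadamard's lemma, expanding $e^{i\hat{H}t}\hat{x}_k^c e^{-i\hat{H}t}$ as the series $\sum_{n\ge 0}\frac{1}{n!}[(i\hat{H}t)^n,\hat{x}_k^c]$ and showing by induction that the $n$th nested commutator equals $[(\Omega H t)^n\hat{x}^c]_k$, then resumming the series into the matrix exponential. You instead differentiate in $t$ and solve the resulting linear operator-valued ODE $\frac{d}{dt}\hat{x}^c(t)=(\Omega H)\hat{x}^c(t)$. Both arguments rest on exactly the same computational kernel --- the first-order commutator $i[\hat{H},\hat{x}_k^c]=\sum_\ell(\Omega H)_{k,\ell}\hat{x}_\ell^c$, which you evaluate correctly, including the transpose-and-sign step $(H\Omega)_{\ell,k}=-(\Omega H)_{k,\ell}$ that matches the paper's use of $H=H^T$ and $\Omega=-\Omega^T$. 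Your route buys you the elimination of the induction over nested commutators, since closure under the first commutator plus linearity of the ODE does all the work; the paper's route buys a purely algebraic term-by-term verification that does not require invoking uniqueness of solutions to an operator-valued initial value problem. At the level of rigor adopted here (formal manipulation of unbounded operators), the two are on equal footing, and either would be acceptable.
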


\begin{proof}
{ The following identity is known as Hadamard's lemma and holds for operators
$X$ and $Y$:
\begin{equation}
e^{X}Ye^{-X}=\sum_{n=0}^{\infty}\frac{\left[  \left(  X\right)  ^{n},Y\right]
}{n!},
\end{equation}
where the nested commutator is defined as
\begin{equation}
\lbrack(X)^{n},Y]\coloneqq\underbrace{[X,\dotsb\lbrack X,[X}_{n\text{ times }
},Y]]\dotsb],\quad\quad\lbrack(X)^{0},Y]\coloneqq Y.\label{eq:def-nested-comm}
\end{equation}
Employing this, we find that
\begin{equation}
e^{i\hat{H}t}\hat{x}_{k}^{c}e^{-i\hat{H}t}=\sum_{n=0}^{\infty}\frac{\left[
\left(  i\hat{H}t\right)  ^{n},\hat{x}_{k}^{c}\right]  }{n!}.
\end{equation}
The zeroth-order term in the series is
\begin{equation}
\hat{x}_{k}^{c}=\left[  I\ \hat{x}^{c}\right]  _{k},
\end{equation}
where $\hat{x}^{c}\equiv\left(  \hat{x}_{1}^{c},\ldots,\hat{x}_{2n}
^{c}\right)  $ is a vector of quadrature operators, $I$ is the $2n\times2n$
identity matrix, and the notation above indicates the $k$th entry of the
vector $I\ \hat{x}^{c}=\hat{x}^{c}$. To determine the first-order term in the
series, consider that
\begin{align}
\left[  i\hat{H}t,\hat{x}_{k}^{c}\right]   &  =it\left[  \frac{1}{2}(\hat
{x}^{c})^{T}H\hat{x}^{c},\hat{x}_{k}^{c}\right]  \\
&  =\frac{it}{2}\left[  \sum_{i,j}h_{ij}\hat{x}_{i}^{c}\hat{x}_{j}^{c},\hat
{x}_{k}^{c}\right]  \\
&  =\frac{it}{2}\sum_{i,j}h_{ij}\left[  \hat{x}_{i}^{c}\hat{x}_{j}^{c},\hat
{x}_{k}^{c}\right]  .\label{eq:ham-commutators-hadamard}
\end{align}
Then
\begin{align}
\left[  \hat{x}_{i}^{c}\hat{x}_{j}^{c},\hat{x}_{k}^{c}\right]   &  =\hat
{x}_{i}^{c}\hat{x}_{j}^{c}\hat{x}_{k}^{c}-\hat{x}_{k}^{c}\hat{x}_{i}^{c}
\hat{x}_{j}^{c}\\
&  =\hat{x}_{i}^{c}\hat{x}_{j}^{c}\hat{x}_{k}^{c}-\hat{x}_{i}^{c}\hat{x}
_{k}^{c}\hat{x}_{j}^{c}+\hat{x}_{i}^{c}\hat{x}_{k}^{c}\hat{x}_{j}^{c}-\hat
{x}_{k}^{c}\hat{x}_{i}^{c}\hat{x}_{j}^{c}\\
&  =\hat{x}_{i}^{c}\left[  \hat{x}_{j}^{c},\hat{x}_{k}^{c}\right]  +\left[
\hat{x}_{i}^{c},\hat{x}_{k}^{c}\right]  \hat{x}_{j}^{c}\\
&  =i\hat{x}_{i}^{c}\Omega_{j,k}+i\hat{x}_{j}^{c}\Omega_{i,k}\\
&  =i\left(  \hat{x}_{i}^{c}\Omega_{j,k}+\hat{x}_{j}^{c}\Omega_{i,k}\right)  .
\end{align}
Plugging back into~\eqref{eq:ham-commutators-hadamard}, we find that
\begin{align}
\left[  i\hat{H}t,\hat{x}_{k}^{c}\right]   &  =\left(  it\right)  \frac{i}
{2}\sum_{i,j}h_{ij}\left(  \hat{x}_{i}^{c}\Omega_{j,k}+\hat{x}_{j}^{c}
\Omega_{i,k}\right)  \\
&  =-\frac{t}{2}\left(  \sum_{i,j}h_{ij}\hat{x}_{i}^{c}\Omega_{j,k}+\sum
_{i,j}h_{ij}\hat{x}_{j}^{c}\Omega_{i,k}\right)  \\
&  =-\frac{t}{2}\left(  \sum_{i,j}-\Omega_{k,j}h_{ji}\hat{x}_{i}^{c}
+\sum_{i,j}-\Omega_{k,i}h_{ij}\hat{x}_{j}^{c}\right)  \\
&  =t\left(  \sum_{i,j}\Omega_{k,j}h_{ji}\hat{x}_{i}^{c}\right)  \\
&  =\left[  \left(  \Omega Ht\right)  \hat{x}^{c}\right]  _{k}.
\end{align}
The third equality follows because $H=H^{T}$ and $\Omega=-\Omega^{T}$. So this
means that the first-order term in the series is
\begin{equation}
\left[  i\hat{H}t,\hat{x}_{k}^{c}\right]  =\left[  \left(  \Omega Ht\right)
\hat{x}^{c}\right]  _{k}.
\end{equation}
}

{ To determine the second-order term, consider that
\begin{align}
\lbrack i\hat{H}t,\left[  i\hat{H}t,\hat{x}_{k}^{c}\right]  ]  &  =\left[
i\hat{H}t,\left[  \left(  \Omega Ht\right)  \hat{x}^{c}\right]  _{k}\right] \\
&  =t\left[  i\hat{H}t,\left(  \sum_{i,j}\Omega_{k,j}h_{ji}\hat{x}_{i}
^{c}\right)  \right] \\
&  =t\sum_{i,j}\Omega_{k,j}h_{ji}\left[  i\hat{H}t,\hat{x}_{i}^{c}\right] \\
&  =t\sum_{i,j}\Omega_{k,j}h_{ji}t\left(  \sum_{\ell,m}\Omega_{i,\ell}h_{\ell
m}\hat{x}_{m}^{c}\right) \\
&  =t^{2}\sum_{i,j,\ell,m}\Omega_{k,j}h_{ji}\Omega_{i,\ell}h_{\ell m}\hat
{x}_{m}^{c}\\
&  =\left[  \left(  \Omega Ht\right)  ^{2}\hat{x}^{c}\right]  _{k}.
\end{align}
So the second-order term is
\begin{equation}
\frac{1}{2!}\left[  \left(  \Omega Ht\right)  ^{2}\hat{x}^{c}\right]  _{k}.
\end{equation}
}

{ This pattern continues, so that the $n$th order term is
\begin{equation}
\frac{1}{n!}\left[  \left(  \Omega Ht\right)  ^{n}\hat{x}^{c}\right]  _{k},
\end{equation}
which finally implies that
\begin{align}
e^{i\hat{H}t}\hat{x}_{k}^{c}e^{-i\hat{H}t}  &  =\sum_{n=0}^{\infty}
\frac{\left[  \left(  i\hat{H}t\right)  ^{n},\hat{x}_{k}^{c}\right]  }{n!}\\
&  =\sum_{n=0}^{\infty}\frac{\left[  \left(  \Omega Ht\right)  ^{n}\hat{x}
^{c}\right]  _{k}}{n!}\\
&  =\left[  \sum_{n=0}^{\infty}\frac{1}{n!}\left(  \Omega Ht\right)  ^{n}
\hat{x}^{c}\right]  _{k}\\
&  =\left[  e^{\Omega Ht}\hat{x}^{c}\right]  _{k}\\
&  =\sum_{\ell}S_{k,\ell}(t)\hat{x}_{\ell}^{c},
\end{align}
as claimed.}
\end{proof}

\section{Proof of Theorem~\ref{thm:main}}

\label{app:main-thm-proof}Let us define the following quantum channel:
\begin{equation}
\Psi_{\mu,H}(X)\coloneqq\int_{\mathbb{R}}dt\ q(t)\ \exp\!\left(  \frac{i}
{2}(\hat{x}^{c})^{T}H\hat{x}^{c}t\right)  X\exp\!\left(  -\frac{i}{2}(\hat
{x}^{c})^{T}H\hat{x}^{c}t\right)  ,
\end{equation}
which is defined in terms of~\eqref{eq:Psi-ch} by making the substitions in
\eqref{eq:Gauss-sub-1} and~\eqref{eq:Gauss-sub-2}. 
We conclude from \eqref{eq:derivs-Gaussian-Ham}--\eqref{eq:derivs-Gaussian-Ham-2} and
Proposition~\ref{prop:FB-info-mats-alt} that
\begin{align}
I_{\left(  k_{1},\ell_{1}\right)  ,\left(  k_{2},\ell_{2}\right)
}^{\operatorname{FB}}(\mu,H) &  =\frac{1}{32}\left\langle \left\{  \left\{
\hat{x}_{k_{1}}^{c},\hat{x}_{\ell_{1}}^{c}\right\}  ,\Psi_{\mu,H}\left(
\left\{  \hat{x}_{k_{2}}^{c},\hat{x}_{\ell_{2}}^{c}\right\}  \right)
\right\}  \right\rangle _{\rho(\mu,H)}
-\frac{1}{16}\left\langle \left\{  \hat{x}_{k_{1}}^{c},\hat{x}
_{\ell_{1}}^{c}\right\}  \right\rangle _{\rho(\mu,H)}\left\langle \left\{
\hat{x}_{k_{2}}^{c},\hat{x}_{\ell_{2}}^{c}\right\}  \right\rangle _{\rho
(\mu,H)},\\
I_{\left(  k,\ell\right)  ,m}^{\operatorname{FB}}(\mu,H) &  =\frac{1}{8}\left\langle \left\{  \left\{  \hat{x}_{k_{1}}^{c},\hat{x}_{\ell_{1}}
^{c}\right\}  ,\Psi_{\mu,H}\left(  -\sum_{j=1}^{2n}h_{m,j}\hat{x}_{j}
^{c}\right)  \right\}  \right\rangle _{\rho(\mu,H)}\nonumber\\
&  \qquad-\frac{1}{4}\left\langle \left\{  \hat{x}_{k_{1}}^{c},\hat{x}
_{\ell_{1}}^{c}\right\}  \right\rangle _{\rho(\mu,H)}\left\langle \left(
-\sum_{j=1}^{2n}h_{m,j}\hat{x}_{j}^{c}\right)  \right\rangle _{\rho(\mu,H)},\\
I_{m,\left(  k,\ell\right)  }^{\operatorname{FB}}(\mu,H) &  =I_{\left(
k,\ell\right)  ,m}(\mu,H),\\
I_{m_{1},m_{2}}^{\operatorname{FB}}(\mu,H) &  =\frac{1}{2}\left\langle \left\{
-\sum_{j_{1}=1}^{2n}h_{m_{1},j_{1}}\hat{x}_{j_{1}}^{c},\Psi_{\mu,H}\left(
-\sum_{j_{2}=1}^{2n}h_{m_{2},j_{2}}\hat{x}_{j_{2}}^{c}\right)  \right\}
\right\rangle _{\rho(\mu,H)}\nonumber\\
&  \qquad-\left\langle \left(  -\sum_{j_{1}=1}^{2n}h_{m_{1},j_{1}}\hat
{x}_{j_{1}}^{c}\right)  \right\rangle _{\rho(\mu,H)}\left\langle \left(
-\sum_{j_{2}=1}^{2n}h_{m_{2},j_{2}}\hat{x}_{j_{2}}^{c}\right)  \right\rangle
_{\rho(\mu,H)}.
\end{align}

The terms $I_{\left(  k,\ell\right)  ,m}^{\operatorname{FB}}(\mu,H)$ and
$I_{m,\left(  k,\ell\right)  }^{\operatorname{FB}}(\mu,H)$ are actually equal
to zero, because they exclusively involve third-order products of the
quadrature operators, which are known to vanish by the arguments of
(C17)--(C20) of~\cite{WTLB17}. This establishes~\eqref{eq:zero-cross-terms-nice}.

For the second term of $I_{\left(  k_{1},\ell_{1}\right)  ,\left(  k_{2}
,\ell_{2}\right)  }^{\operatorname{FB}}(\mu,H)$, we find that
\begin{align}
&  \frac{1}{16}\left\langle \left\{  \hat{x}_{k_{1}}^{c},\hat{x}_{\ell_{1}
}^{c}\right\}  \right\rangle _{\rho(\mu,H)}\left\langle \left\{  \hat
{x}_{k_{2}}^{c},\hat{x}_{\ell_{2}}^{c}\right\}  \right\rangle _{\rho(\mu
,H)}\nonumber\\
&  =\frac{1}{4}\left\langle \frac{1}{2}\left\{  \hat{x}_{k_{1}}^{c},\hat
{x}_{\ell_{1}}^{c}\right\}  \right\rangle _{\rho(\mu,H)}\left\langle \frac
{1}{2}\left\{  \hat{x}_{k_{2}}^{c},\hat{x}_{\ell_{2}}^{c}\right\}
\right\rangle _{\rho(\mu,H)}\\
&  =\frac{1}{4}V_{k_{1},\ell_{1}}V_{k_{2},\ell_{2}}.
\end{align}
After adopting the notation
\begin{equation}
\hat{H}\equiv\frac{1}{2}(\hat{x}^{c})^{T}H\hat{x}^{c},\label{eq:Ham-abbrev}
\end{equation}
consider that
\begin{align}
&  \frac{1}{32}\left\langle \left\{  \left\{  \hat{x}_{k_{1}}^{c},\hat
{x}_{\ell_{1}}^{c}\right\}  ,\Psi_{\mu,H}\left(  \left\{  \hat{x}_{k_{2}}
^{c},\hat{x}_{\ell_{2}}^{c}\right\}  \right)  \right\}  \right\rangle
_{\rho(\mu,H)}\nonumber\\
&  =\frac{1}{ 32}\int_{\mathbb{R}}dt\ q(t)\ \left\langle \left\{  \left\{
\hat{x}_{k_{1}}^{c},\hat{x}_{\ell_{1}}^{c}\right\}  ,e^{i\hat{H}t}\left\{
\hat{x}_{k_{2}}^{c},\hat{x}_{\ell_{2}}^{c}\right\}  e^{-i\hat{H}t}\right\}
\right\rangle _{\rho(\mu,H)}\\
&  =\frac{1}{ 32}\int_{\mathbb{R}}dt\ q(t)\ \left\langle \left\{  \left\{
\hat{x}_{k_{1}}^{c},\hat{x}_{\ell_{1}}^{c}\right\}  ,\left\{  e^{i\hat{H}
t}\hat{x}_{k_{2}}^{c}e^{-i\hat{H}t},e^{i\hat{H}t}\hat{x}_{\ell_{2}}
^{c}e^{-i\hat{H}t}\right\}  \right\}  \right\rangle _{\rho(\mu,H)}.
\end{align}
Now consider that, by an application of Lemma~\ref{lem:BCH-etc}, it follows
that
\begin{equation}
e^{i\hat{H}t}\hat{x}_{k}^{c}e^{-i\hat{H}t}=\sum_{\ell}S_{k,\ell}(t)\hat {x}_{\ell}^{c},\label{eq:symplectic-Gauss-Ham}
\end{equation}
where $S(t)\coloneqq e^{\Omega Ht}$. (See also Eq.~(3.22) of~\cite{Ser17} for
the uncentered case.) Then we find that
\begin{align}
&  \frac{1}{ 32}\left\langle \left\{  \left\{  \hat{x}_{k_{1}}^{c},\hat
{x}_{\ell_{1}}^{c}\right\}  ,\left\{  e^{i\hat{H}t}\hat{x}_{k_{2}}
^{c}e^{-i\hat{H}t},e^{i\hat{H}t}\hat{x}_{\ell_{2}}^{c}e^{-i\hat{H}t}\right\}
\right\}  \right\rangle _{\rho(\mu,H)}\nonumber\\
&  =\frac{1}{ 32}\left\langle \left\{  \left\{  \hat{x}_{k_{1}}^{c},\hat
{x}_{\ell_{1}}^{c}\right\}  ,\left\{  \sum_{i_{2}}S_{k_{2},i_{2}}(t)\hat
{x}_{i_{2}}^{c},\sum_{j_{2}}S_{\ell_{2},j_{2}}(t)\hat{x}_{j_{2}}^{c}\right\}
\right\}  \right\rangle _{\rho(\mu,H)}\\
&  =\sum_{i_{2},j_{2}}S_{k_{2},i_{2}}(t)S_{\ell_{2},j_{2}}(t)\frac{1}{ 32}\left\langle \left\{  \left\{  \hat{x}_{k_{1}}^{c},\hat{x}_{\ell_{1}}
^{c}\right\}  ,\left\{  \hat{x}_{i_{2}}^{c},\hat{x}_{j_{2}}^{c}\right\}
\right\}  \right\rangle _{\rho(\mu,H)}\\
&  =\sum_{i_{2},j_{2}}S_{k_{2},i_{2}}(t)S_{\ell_{2},j_{2}}(t)\frac{1}
{ 4}\left\langle \left(  \hat{x}_{k_{1}}^{c}\circ\hat{x}_{\ell_{1}}^{c}\right)
\circ\left(  \hat{x}_{i_{2}}^{c}\circ\hat{x}_{j_{2}}^{c}\right)  \right\rangle
_{\rho(\mu,H)}\\
&  =\sum_{i_{2},j_{2}}S_{k_{2},i_{2}}(t)S_{\ell_{2},j_{2}}(t)\frac{1}
{ 4}\left(
\begin{array}
[c]{c}
V_{k_{1},\ell_{1}}V_{i_{2},j_{2}}+V_{k_{1},i_{2}}V_{\ell_{1},j_{2}}
+V_{k_{1},j_{2}}V_{\ell_{1},i_{2}}\\
-\frac{1}{4}\Omega_{k_{1},i_{2}}\Omega_{\ell_{1},j_{2}}-\frac{1}{4}
\Omega_{k_{1},j_{2}}\Omega_{\ell_{1},i_{2}}
\end{array}
\right)  \\
&  =\frac{1}{ 4}\left(
\begin{array}
[c]{c}
V_{k_{1},\ell_{1}}\left[  S(t)VS(t)^{T}\right]  _{k_{2},\ell_{2}}+\left[
VS(t)^{T}\right]  _{k_{1},k_{2}}\left[  VS(t)^{T}\right]  _{\ell_{1},\ell_{2}
}\\
+\left[  VS(t)^{T}\right]  _{k_{1},\ell_{2}}\left[  VS(t)^{T}\right]
_{\ell_{1},k_{2}}-\frac{1}{4}\left[  \Omega S(t)^{T}\right]  _{k_{1},k_{2}
}\left[  \Omega S(t)^{T}\right]  _{\ell_{1},\ell_{2}}\\
-\frac{1}{4}\left[  \Omega S(t)^{T}\right]  _{k_{1},\ell_{2}}\left[  \Omega
S(t)^{T}\right]  _{\ell_{1},k_{2}}
\end{array}
\right)  .
\end{align}
In the above, we made use of Eq.~(E11) of~\cite{MI10}. Thus we conclude that
\begin{multline}
I_{\left(  k_{1},\ell_{1}\right)  ,\left(  k_{2},\ell_{2}\right)
}^{\operatorname{FB}}(\mu,H)=\label{eq:cross-term-Ham-mat-pf}\\
\frac{1}{ 4}\int_{\mathbb{R}}dt\ q(t)\left(
\begin{array}
[c]{c}
V_{k_{1},\ell_{1}}\left[  S(t)VS(t)^{T}\right]  _{k_{2},\ell_{2}}+\left[
VS(t)^{T}\right]  _{k_{1},k_{2}}\left[  VS(t)^{T}\right]  _{\ell_{1},\ell_{2}
}\\
+\left[  VS(t)^{T}\right]  _{k_{1},\ell_{2}}\left[  VS(t)^{T}\right]
_{\ell_{1},k_{2}}-\frac{1}{4}\left[  \Omega S(t)^{T}\right]  _{k_{1},k_{2}
}\left[  \Omega S(t)^{T}\right]  _{\ell_{1},\ell_{2}}\\
-\frac{1}{4}\left[  \Omega S(t)^{T}\right]  _{k_{1},\ell_{2}}\left[  \Omega
S(t)^{T}\right]  _{\ell_{1},k_{2}}
\end{array}
\right)  -\frac{1}{4}V_{k_{1},\ell_{1}}V_{k_{2},\ell_{2}},
\end{multline}
which establishes~\eqref{eq:cross-term-Ham-mat}.

Finally,
\begin{multline}
I_{m_{1},m_{2}}^{\operatorname{FB}}(\mu,H)=\frac{1}{2} \left\langle \left\{  -\sum
_{j_{1}=1}^{2n}h_{m_{1},j_{1}}\hat{x}_{j_{1}}^{c},\Psi_{\mu,H}\left(
-\sum_{j_{2}=1}^{2n}h_{m_{2},j_{2}}\hat{x}_{j_{2}}^{c}\right)  \right\}
\right\rangle _{\rho(\mu,H)}\\
-\left\langle \left(  -\sum_{j_{1}=1}^{2n}h_{m_{1},j_{1}}\hat{x}_{j_{1}}
^{c}\right)  \right\rangle _{\rho(\mu,H)}\left\langle \left(  -\sum_{j_{2}
=1}^{2n}h_{m_{2},j_{2}}\hat{x}_{j_{2}}^{c}\right)  \right\rangle _{\rho
(\mu,H)}.
\end{multline}
So then
\begin{align}
&  \left\langle \left\{  -\sum_{j_{1}=1}^{2n}h_{m_{1},j_{1}}\hat{x}_{j_{1}
}^{c},\Psi_{\mu,H}\left(  -\sum_{j_{2}=1}^{2n}h_{m_{2},j_{2}}\hat{x}_{j_{2}
}^{c}\right)  \right\}  \right\rangle _{\rho(\mu,H)}\nonumber\\
&  =\sum_{j_{1},j_{2}=1}^{2n}h_{m_{1},j_{1}}h_{m_{2},j_{2}}\left\langle
\left\{  \hat{x}_{j_{1}}^{c},\Psi_{\mu,H}\left(  \hat{x}_{j_{2}}^{c}\right)
\right\}  \right\rangle _{\rho(\mu,H)}\\
&  =\sum_{j_{1},j_{2}=1}^{2n}h_{m_{1},j_{1}}h_{m_{2},j_{2}}\int_{\mathbb{R}
}dt\ q(t)\left\langle \left\{  \hat{x}_{j_{1}}^{c},e^{i\hat{H}t}\hat{x}
_{j_{2}}^{c}e^{-i\hat{H}t}\right\}  \right\rangle _{\rho(\mu,H)}\\
&  =\sum_{j_{1},j_{2}=1}^{2n}h_{m_{1},j_{1}}h_{m_{2},j_{2}}\int_{\mathbb{R}
}dt\ q(t)\left\langle \left\{  \hat{x}_{j_{1}}^{c},\sum_{\ell}S_{j_{2},\ell
}(t)\hat{x}_{\ell}^{c}\right\}  \right\rangle _{\rho(\mu,H)}\\
&  =\sum_{j_{1},j_{2}=1}^{2n}h_{m_{1},j_{1}}h_{m_{2},j_{2}}\int_{\mathbb{R}
}dt\ q(t)\sum_{\ell}S_{j_{2},\ell}(t)\left\langle \left\{  \hat{x}_{j_{1}}
^{c},\hat{x}_{\ell}^{c}\right\}  \right\rangle _{\rho(\mu,H)}\\
&  =\sum_{j_{1},j_{2}=1}^{2n}h_{m_{1},j_{1}}h_{m_{2},j_{2}}\int_{\mathbb{R}
}dt\ q(t)\sum_{\ell}S_{j_{2},\ell}(t)2V_{j_{1},\ell}\\
&  =2\sum_{j_{1},j_{2}=1}^{2n}h_{m_{1},j_{1}}h_{m_{2},j_{2}}\int_{\mathbb{R}
}dt\ q(t)\left[  VS(t)^{T}\right]  _{j_{1},j_{2}}\\
&  =2\int_{\mathbb{R}}dt\ q(t)\left[  HVS(t)^{T}H\right]  _{m_{1},m_{2}}.
\end{align}
The third equality follows from Lemma~\ref{lem:BCH-etc}, with $S(t)=e^{\Omega
Ht}$. Additionally,
\begin{align}
&  \left\langle \left(  -\sum_{j_{1}=1}^{2n}h_{m_{1},j_{1}}\hat{x}_{j_{1}}
^{c}\right)  \right\rangle _{\rho(\mu,H)}\left\langle \left(  -\sum_{j_{2}
=1}^{2n}h_{m_{2},j_{2}}\hat{x}_{j_{2}}^{c}\right)  \right\rangle _{\rho
(\mu,H)}\nonumber\\
&  =\sum_{j_{1}=1}^{2n}h_{m_{1},j_{1}}\sum_{j_{2}=1}^{2n}h_{m_{2},j_{2}
}\left\langle \hat{x}_{j_{1}}^{c}\right\rangle _{\rho(\mu,H)}\left\langle
\hat{x}_{j_{2}}^{c}\right\rangle _{\rho(\mu,H)}\\
&  =0.
\end{align}
So we finally have that
\begin{equation}
I_{m_{1},m_{2}}^{\operatorname{FB}}(\mu,H)=\int_{\mathbb{R}}dt\ q(t)\left[
HVS(t)^{T}H\right]  _{m_{1},m_{2}}.\label{eq:cross-term-mean-pf}
\end{equation}
Finally, by noticing that
\begin{equation}
S(t)^{T}=\left(  e^{\Omega Ht}\right)  ^{T}=e^{\left(  \Omega H\right)  ^{T}
t}=e^{-H\Omega t},
\end{equation}
and redefining $S(t)$ to be $S(t)^{T}$, we arrive
at~\eqref{eq:cross-term-Ham-mat} and~\eqref{eq:cross-term-mean} from
\eqref{eq:cross-term-Ham-mat-pf}\ and~\eqref{eq:cross-term-mean-pf},
respectively. This concludes the proof of the Fisher--Bures information matrix elements.

The proof for the Kubo--Mori information matrix elements indeed follows by the
substitution $q(t)\rightarrow p(t)$, due to this being the only difference
between the formulas in~\eqref{eq:KM-info-mat-gen-thermal} and~\eqref{eq:FB-info-mat-gen-thermal-alt}.

\section{Proof of Theorem~\ref{thm:derivative-bosonic-Gaussian}}

\label{app:deriv-bos-G-therm}Define the channel
\begin{equation}
\Phi_{\mu,H}(X)\coloneqq\int_{\mathbb{R}}dt\ p(t)\ e^{i\hat{H}t}Xe^{-i\hat
{H}t},
\end{equation}
where $\hat{H}$ is defined in~\eqref{eq:Ham-abbrev} and $p(t)$ in
\eqref{eq:high-peak-tent}. By applying Proposition~\ref{prop:deriv} and
\eqref{eq:derivs-Gaussian-Ham}, we find that
\begin{align}
\frac{\partial}{\partial h_{k,\ell}}\rho(\mu,H)  & =-\frac{1}{2}\left\{
\Phi_{\mu,H}\left(  \frac{1}{4}\left\{  \hat{x}_{k}^{c},\hat{x}_{\ell}
^{c}\right\}  \right)  ,\rho(\mu,H)\right\}  +\rho(\mu,H)\left\langle \frac
{1}{4}\left\{  \hat{x}_{k}^{c},\hat{x}_{\ell}^{c}\right\}  \right\rangle
_{\rho(\mu,H)}\\
& =-\frac{1}{8}\int_{\mathbb{R}}dt\ p(t)\left\{  e^{i\hat{H}t}\left\{  \hat
{x}_{k}^{c},\hat{x}_{\ell}^{c}\right\}  e^{-i\hat{H}t},\rho(\mu,H)\right\}
+\frac{1}{2}\rho(\mu,H)\ V_{k,\ell}\\
& =-\frac{1}{8}\int_{\mathbb{R}}dt\ p(t)\left\{  \left\{  e^{i\hat{H}t}\hat
{x}_{k}^{c}e^{-i\hat{H}t},e^{i\hat{H}t}\hat{x}_{\ell}^{c}e^{-i\hat{H}
t}\right\}  ,\rho(\mu,H)\right\}  +\frac{1}{2}\rho(\mu,H)\ V_{k,\ell}\\
& =-\frac{1}{8}\int_{\mathbb{R}}dt\ p(t)\left\{  \left\{  \sum_{k^{\prime}
}S_{k,k^{\prime}}(t)\hat{x}_{k^{\prime}}^{c},\sum_{\ell^{\prime}}S_{\ell
,\ell^{\prime}}(t)\hat{x}_{\ell^{\prime}}^{c}\right\}  ,\rho(\mu,H)\right\}
+\frac{1}{2}\rho(\mu,H)\ V_{k,\ell}\\
& =-\frac{1}{8}\int_{\mathbb{R}}dt\ p(t)\sum_{k^{\prime},\ell^{\prime}
}S_{k,k^{\prime}}(t)S_{\ell,\ell^{\prime}}(t)\left\{  \left\{  \hat
{x}_{k^{\prime}}^{c},\hat{x}_{\ell^{\prime}}^{c}\right\}  ,\rho(\mu
,H)\right\}  +\frac{1}{2}\rho(\mu,H)\ V_{k,\ell}.
\end{align}
This establishes~\eqref{eq:deriv-wrt-Ham}.

Now applying Proposition~\ref{prop:deriv} and
\eqref{eq:derivs-Gaussian-Ham-2}, we find that
\begin{align}
\frac{\partial}{\partial\mu_{m}}\rho(\mu,H)  & =-\frac{1}{2}\left\{  \Phi
_{\mu,H}\left(  -\sum_{j=1}^{2n}h_{m,j}\hat{x}_{j}^{c}\right)  ,\rho
(\mu,H)\right\}  +\rho(\mu,H)\left\langle -\sum_{j=1}^{2n}h_{m,j}\hat{x}
_{j}^{c}\right\rangle _{\rho(\mu,H)}\\
& =-\frac{1}{2}\left\{  \Phi_{\mu,H}\left(  -\sum_{j=1}^{2n}h_{m,j}\hat{x}
_{j}^{c}\right)  ,\rho(\mu,H)\right\}  \\
& =\frac{1}{2}\sum_{j=1}^{2n}h_{m,j}\int_{\mathbb{R}}dt\ p(t)\left\{
e^{i\hat{H}t}\hat{x}_{j}^{c}e^{-i\hat{H}t},\rho(\mu,H)\right\}  \\
& =\frac{1}{2}\int_{\mathbb{R}}dt\ p(t)\sum_{j,j^{\prime}}h_{m,j}
S_{j,j^{\prime}}(t)\left\{  \hat{x}_{j^{\prime}}^{c},\rho(\mu,H)\right\}  ,
\end{align}
thus concluding the proof of~\eqref{eq:deriv-wrt-mean}.

\section{Fidelity results comparison}

 As a consistency check, we compare our formulation to that of Ref.~\cite{paraoanu2000fidelity}; the fidelity between two displaced squeezed thermal states is given by
\begin{align}
F(\rho_1,\rho_2) &= \frac{2}{\sqrt{\Delta + T} - \sqrt{T}} \exp\left[ -\frac{1}{2}u^T(V_1 + V_2)^{-1} u \right], \label{eq:scutaru-fid} \\
          \Delta &= \det ( 2(V_1 + V_2)), \qquad T = (\det (2V_1) - 1)(\det (2V_2) - 1).
\end{align}

Consider the zero-displacement one-mode squeezed thermal state with fixed mean photon number $\epsilon>0$ and covariance matrix. Fidelity is invariant under unitary operations; therefore we can set the rotation angle to zero for simplicity.
Then
\begin{align}
V(r)=\nu
\begin{pmatrix}
e^{2r} & 0\\
0 & e^{-2r}
\end{pmatrix}, \qquad \nu = (1+\epsilon)/2.
\end{align}

We compare two nearby states with squeezing parameters $r$ and $r+dr$, so that
\begin{align}
V_1=V(r),\qquad V_2=V(r+dr).
\end{align}
Since the relative displacement is zero, the fidelity formula in \eqref{eq:scutaru-fid} reduces to
\begin{align}
F(\rho_r,\rho_{r+dr})
=
\frac{2}{\sqrt{\Delta+T}-\sqrt{T}},
\end{align}

For this family, $\det V(r)=\nu^2$, so that
\begin{align}
T=(4\nu^2-1)^2.
\end{align}
Furthermore,
\begin{align}
\det(V_1+V_2)
&=
\nu^2\bigl(e^{2r}+e^{2r+2dr}\bigr)\bigl(e^{-2r}+e^{-2r-2dr}\bigr) \\
&=
4\nu^2\cosh^2(dr),
\end{align}
and therefore
\begin{align}
\Delta=16\nu^2\cosh^2(dr).
\end{align}
Thus,
\begin{align}
F(\rho_r,\rho_{r+dr})
=
\frac{2}{
\sqrt{\,16\nu^2\cosh^2(dr)+(4\nu^2-1)^2\,}
-(4\nu^2-1)
}.
\end{align}

First, using the Fisher--Bures convention adopted in this paper,
\begin{align}
d_B^2=2(1-\sqrt{F})
=
\frac14 I^{\mathrm{FB}}_{rr}\,dr^2,
\end{align}
expanding for $dr$,
\begin{align}
d^2_B = 2(1-\sqrt{F(\rho_r,\rho_{r+dr}})=1- \frac{4  \nu ^2}{4 \nu ^2+1} dr^2+O(dr^4)
\end{align}
we obtain
\begin{align}
I_{rr}
=
\frac{16\nu^2}{4\nu^2+1}.
\end{align}
If we write $\nu=(1+\epsilon)/2$, as in Sec.~VII, then this becomes
\begin{align}
\label{eq:I_rr}
I_{rr}
=
\frac{4(1+\epsilon)^2}{(1+\epsilon)^2+1}.
\end{align}

Since we have a single parameter $r$ that the Hamiltonian entries depend on: $H = H(r)$
\begin{align}
H = \log\left( \frac{\epsilon+2}{\epsilon}\right) 
\left[
\begin{array}{cc}
    e^{-2 r}  & 0 \\
   0 &     e^{2 r}  \\
\end{array}
\right];
\end{align}
therefore,
\begin{align}
h_{xx} =  \log\left( \frac{\epsilon+2}{\epsilon}\right)  e^{-2r},\qquad 
h_{yy} =-  \log\left( \frac{\epsilon+2}{\epsilon}\right)  e^{2r}, \qquad
h_{xy} = 0.
\end{align}
and the quantum Fisher information is the pullback of the Fisher information matrix in the $H$ coordinates
\begin{align}
I_{rr} = \sum_{i,j} \frac{\partial h_i}{\partial r} I_{ij}^{(H)} \frac{\partial h_j}{\partial r}.
\end{align}

To convert between the QFI matrices given in coordinates $H = (h_{xx},h_{xy},h_{yy})$ to $(r,\epsilon,\theta)$, consider that
\begin{align}
I^{(r,\epsilon,\theta)} = J^T I^{(H)} J,
\end{align}
where
\begin{align}
J_{(h_{xx},h_{xy},h_{yy}) \leftarrow (r,\epsilon,\theta)}
=
\begin{pmatrix}
\dfrac{\partial h_{xx}}{\partial r} &
\dfrac{\partial h_{xx}}{\partial \epsilon} &
\dfrac{\partial h_{xx}}{\partial \theta}
\\[10pt]
\dfrac{\partial h_{xy}}{\partial r} &
\dfrac{\partial h_{xy}}{\partial \epsilon} &
\dfrac{\partial h_{xy}}{\partial \theta}
\\[10pt]
\dfrac{\partial h_{yy}}{\partial r} &
\dfrac{\partial h_{yy}}{\partial \epsilon} &
\dfrac{\partial h_{yy}}{\partial \theta}
\end{pmatrix}.
\end{align}
Focusing on only the element $I_{rr}$,
\begin{align}
I_{rr}
=
4h_{xx}^2\, I_{(xx)(xx)}
-8h_{xx}h_{yy}\, I_{(xx)(yy)}
+4h_{yy}^2\, I_{(yy)(yy)}.
\end{align}
We compared our expression to that given by Ref.~\cite{paraoanu2000fidelity} as a function of $\epsilon$, and they are numerically equal.

\end{document}